  \providecommand\BibTeX{{%
    \normalfont B\kern-0.5em{\scshape i\kern-0.25em b}\kern-0.8em\TeX}}}
\newcommand\independent{\protect\mathpalette{\protect\independenT}{\perp}}
\def\independenT#1#2{\mathrel{\rlap{$#1#2$}\mkern2mu{#1#2}}}
\newcommand\Tau{\mathrm{T}}
\newcommand{\matr}[1]{\bm{#1}} 
\newcommand{\vect}[1]{\bm{#1}}
\theoremstyle{plain}
\newtheorem{theorem}{Theorem}
\newtheorem{proposition}{Proposition}
\theoremstyle{definition}
\newtheorem{definition}{Definition}
\newtheorem{assumption}{Assumption}
\theoremstyle{remark} %
\newtheorem*{remark}{Remark} %
\algnewcommand\algorithmicinput{\textbf{Input}}
\algnewcommand\algorithmicoutput{\textbf{Output}}
\algnewcommand\Input{\item[\algorithmicinput]}%
\algnewcommand\Output{\item[\algorithmicoutput]}%
\begin{document}


%
\title[Causal Meta-Mediation Analysis]{Causal Meta-Mediation Analysis: Inferring Dose-Response Function From Summary Statistics of Many Randomized Experiments}

%
\author{Zenan Wang}
\affiliation{%
  \institution{UC Berkeley}
  \streetaddress{530 Evans Hall}
  \city{Berkeley}
  \state{California}
  \postcode{94720}
}
\email{zenan.wang@berkeley.edu}

\author{Xuan Yin}
\affiliation{%
  \institution{Etsy, Inc.}
  \streetaddress{117 Adams St}
  \city{Brooklyn}
  \state{New York}
  \postcode{11201}
  }
\email{xyin@etsy.com}

\author{Tianbo Li}
\affiliation{%
  \institution{Etsy, Inc.}
  \streetaddress{117 Adams St}
  \city{Brooklyn}
  \state{New York}
  \postcode{11201}
  }
\email{tli@etsy.com}

\author{Liangjie Hong}
\affiliation{%
  \institution{LinkedIn, Inc.}
  \streetaddress{117 Adams St}
  \city{Sunnyvale}
  \state{California}
  \postcode{94085}
  }
\email{liahong@linkedin.com}%

%
\begin{abstract}
It is common in the internet industry to use offline-developed algorithms to power online products that contribute to the success of a business.  Offline-developed algorithms are guided by offline evaluation metrics, which are often different from online business key performance indicators (KPIs).  To maximize business KPIs, it is important to pick a north star among all available offline evaluation metrics.  By noting that online products can be measured by online evaluation metrics, the online counterparts of offline evaluation metrics, we decompose the problem into two parts.  As the offline A/B test literature works out the first part: counterfactual estimators of offline evaluation metrics that move the same way as their online counterparts, we focus on the second part: causal effects of online evaluation metrics on business KPIs.  The north star of offline evaluation metrics should be the one whose online counterpart causes the most significant lift in the business KPI.  We model the online evaluation metric as a mediator and formalize its causality with the business KPI as dose-response function (DRF).  Our novel approach, causal meta-mediation analysis, leverages summary statistics of many existing randomized experiments to identify, estimate, and test the mediator DRF.  It is easy to implement and to scale up, and has many advantages over the literature of mediation analysis and meta-analysis.  We demonstrate its effectiveness by simulation and implementation on real data.
\end{abstract}

%
%


%
\keywords{causal inference; meta-analysis; mediation analysis; experiment; dose-response function; A/B test; evaluation metric; business KPI}

\maketitle

\section{Introduction}
Nowadays it is common in the internet industry to develop algorithms that power online products using historical data.  The one that improves evaluation metrics from historical data will be tested against the one that has been in production to assess the lift in key performance indicators (KPIs) of the business in online A/B tests.  Here we refer to metrics calculated from historical data as \textit{offline} metrics and metrics calculated in online A/B tests as \textit{online} metrics.  In many cases, offline evaluation metrics are different from online business KPIs.  For instance, a ranking algorithm, which powers search pages in e-commerce platforms, typically optimizes for relevance by predicting purchase or click probabilities of items.  It could be tested offline (offline A/B tests) for rank-aware evaluation metrics, for example, normalized discounted cumulative gain ({\tt NDCG}), mean reciprocal rank ({\tt MRR}) or mean average precision ({\tt MAP}), which are calculated from the test set of historical purchase or click-through feedback of users.  Most e-commerce platforms, however, deem sitewide gross merchandise value ({\tt GMV}) as their business KPI and test for it online.  There could be various reasons not to directly optimize for business KPIs offline or use business KPIs as offline evaluation metrics, such as technical difficulty, business reputation, or user loyalty.  Nonetheless, the discrepancy between offline evaluation metrics and online business KPIs poses a challenge to product owners because it is not clear that, in order to maximize online business KPIs, which offline evaluation metric should be adopted to guide the offline development of algorithms.

The challenge essentially asks for the causal effects of increasing offline evaluation metrics on business KPIs, e.g., how business KPIs would change for a 10\% increase in an offline evaluation metric.  The offline evaluation metric in which a 10\% increase could result in the most significant lift in business KPIs should be the north star to guide algorithm development.  Algorithms developed offline power online products, and online products contribute to the success of the business (see Figure~\ref{fig: offline-online}).  By noting that online products can be measured by online evaluation metrics, the online counterparts of offline evaluation metrics, we decompose the problem into two parts.  The offline A/B test literature (see, e.g, \citet{Gilotte2018OfflineSystems}) works out the first part (the black arrow): counterfactual estimators of offline evaluation metrics to bridge the inconsistency between changes of offline and online evaluation metrics.  We focus on the second part (the red arrow): the causality between online products (assessed by online evaluation metrics) and the business (assessed by online business KPIs).  The offline evaluation metric whose online counterpart causes the most significant lift in online business KPIs should be the north star.  Hence, the question for us becomes, how business KPIs would change for a 10\% increase in an online evaluation metric.

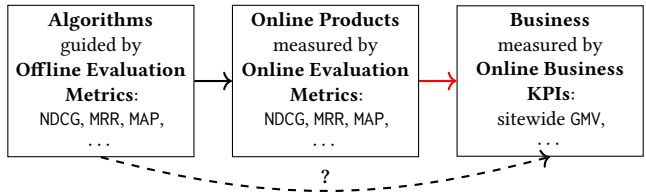
\begin{figure}[h]
    \centering
    \begin{tikzpicture}
    [
    every node/.style={node distance=1cm, scale=0.85},
    force/.style={rectangle, draw, inner sep=3pt, text width=2.7cm, text badly centered, minimum height=1.2cm, font=\normalsize\rmfamily}
    ] 

    \node [force] (algorithm) {\textbf{Algorithms}\\guided by\\\textbf{Offline Evaluation Metrics}:\\{\tt NDCG}, {\tt MRR}, {\tt MAP},\\$\cdots$};
    
    \node [force, right=0.5cm of algorithm] (products) {\textbf{Online Products}\\measured by\\\textbf{Online Evaluation Metrics}:\\{\tt NDCG}, {\tt MRR}, {\tt MAP},\\$\cdots$};
    
    \node [force, right=0.5cm of products] (business) {\textbf{Business}\\measured by\\\textbf{Online Business KPIs}:\\sitewide {\tt GMV},\\$\cdots$};

    \path[->,thick] (algorithm) edge (products);
    \path[->,thick,red] (products) edge (business);
    \path[->,thick,dashed] (algorithm.south) edge[bend right=15] node[above] {\textbf{?}} (business.south);
    \end{tikzpicture} 
\caption{The Causal Path from Algorithms to Business}
\label{fig: offline-online}
\end{figure}
Randomized controlled trials, or online A/B tests, are popular to measure the causal effects of online product change on business KPIs.  Unfortunately, they cannot answer our question directly.  In online A/B tests, in order to compare the business KPIs caused by different values of an online evaluation metric, we need to fix the metric at its different values for treatment and control groups.  Take the ranking algorithm as an example.  If we could fix online {\tt NDCG} of the search page at 0.22 and 0.2 for treatment and control groups respectively, then we would know how sitewide {\tt GMV} would change for a 10\% increase in online {\tt NDCG} at 0.2.  However, this experimental design is impossible, because most online evaluation metrics depend on users' feedback and thus cannot be directly controlled.

We address the question by developing a novel approach of causal inference.  We model the causality between online evaluation metrics and business KPIs by \textit{dose-response function} ({\tt DRF}) in potential outcome framework~\citep{Imbens2000TheFunctions,Imbens2004TheTreatments}.  {\tt DRF} originates from medicine and describes the magnitude of the response of an organism given different doses of a stimulus.  Here we use it to depict the value of a business KPI given different values of an online evaluation metric.  Different from doses of stimuli, values of online evaluation metrics cannot be directly manipulated.  However, they could differ between treatment and control groups in experiments of treatments other than algorithms---user interface/user experience (UI/UX) design, marketing, etc.  This could be due to the ``fat hand''~\citep{Peysakhovich,Yin2019TheAnalysis} nature of online A/B tests that a single intervention can change many causal variables at once.  A change of the tested feature, which is not algorithm, could induce users to change their engagement with algorithm-powered online products, so that values of online evaluation metrics would change.  For instance, in an experiment of UI design, users might change their search behaviors because of the new UI design, so that values of online {\tt NDCG}, which depends on search interaction, would change, even though ranking algorithm does not change.  The evidence suggests that online evaluation metrics could be \textit{mediators} that (partially) transmit causal effects of treatments on business KPIs in experiments where treatments are not necessarily algorithm-related.  Hence, we formalize the problem as the identification, estimation, and test of mediator {\tt DRF}.

In mediation analysis literature, there are two popular identification techniques: \textit{sequential ignorability} ({\tt SI}) and \textit{instrumental variable} ({\tt IV}).  {\tt SI} assumes each potential mediator is independent of all potential outcomes conditional on the assigned treatment, whereas {\tt IV} permits dependence between unknown factors and mediators but forbids the existence of direct effects of the treatment.  Rather than making these stringent assumptions, we leverage trial characteristics to explain \textit{average direct effect} ({\tt ADE}) in each experiment so that we can tease it out from \textit{average treatment effect} ({\tt ATE}) to identify the causal mediation. The utilization of trial characteristics means we have to use data from many trials because we need variations in trial characteristics. Hence, we develop our framework as a meta-analysis and propose an algorithm that only uses summarized results from many existing experiments and gain the advantage of easy implementation to scale.

Most meta-analyses rely on summarized results from different studies with different raw data sources.  Therefore, it is almost impossible to learn more beyond the distribution of {\tt ATE}s. Fortunately, the internet industry produces plentiful randomized trials with consistently defined metrics, and thus presents an opportunity for performing a more complicated meta-analysis.  Literature is lacking in this area while we create the framework of \textit{causal meta-mediation analysis} ({\tt CMMA}) to fill in the gap.

Another prominent strength of our approach in real application is, for a new product that has been shipped online but has few A/B tests, it is plausible to explore the causality between its online metrics and business KPIs from many A/B tests of other products.  The values of online metrics of the new product can differ between treatment and control groups in experiments of other products ("fat hand"~\citep{Peysakhovich,Yin2019TheAnalysis}), which makes it possible to solve for mediator {\tt DRF} of the new product without its own A/B tests.  

Note that, our approach can be applied to any evaluation metric that is defined at experimental-unit level, like metrics discussed in offline A/B test literature.  The experimental unit means the unit for randomization in online A/B tests.  For example, in search page experiments, the experimental unit is typically the user.  Also, the evaluation metric can be any combination of existing experimental-unit-level metrics.

To summarize, our contributions in this paper include:
\begin{enumerate}[nolistsep]
    \item This is the first study that offers a framework to choose the north star among all available offline evaluation metrics for algorithm development to maximize business KPIs when offline evaluation metrics and business KPIs are different.  We decompose the problem into two parts.  Since the offline A/B test literature works out the first part: counterfactual estimators of offline evaluation metrics to bridge the inconsistency between changes of offline and online metrics, we work out the second part: inferring causal effects of online evaluation metrics on business KPIs.  The offline evaluation metric whose online counterpart causes the most significant lift in business KPIs should be the north star.  We show the implementation of our framework on data from Etsy.com.
    \item Our novel approach {\tt CMMA} combines mediation analysis and meta-analysis to identify, estimate, and test mediator {\tt DRF}.  It relaxes standard {\tt SI} assumption and overcomes the limitation of {\tt IV}, both of which are popular in causal mediation literature.  It extends meta-analysis to solve causal mediation while the meta-analysis literature only learns the distribution of {\tt ATE}.  We demonstrate its effectiveness by simulation and show its performance is superior to other methods.
    \item Our novel approach {\tt CMMA} uses only trial-level summary statistics (i.e., meta-data) of many existing trials, which makes it easy to implement and to scale up.  It can be applied to all experimental-unit-level evaluation metrics or any combination of them.  Because it solves for causality problem of a product by leveraging trials of all products, it could be particularly useful in real applications for a new product that has been shipped online but has few A/B tests.
\end{enumerate}
%
\section{Literature Review}\label{sec:literature_review}
We draw on two strands of literature: mediation analysis and meta-analysis.  We briefly discuss them in turn.
\subsection{Mediation Analysis}
Our framework expands on causal mediation analysis.  Mediation analysis is actively conducted in various disciplines, such as psychology~\citep{MacKinnon2006MediationAnalysis,Rucker2011MediationRecommendations}, political science~\citep{green2010,Imai2010}, economics~\citep{Heckman2015EconometricInputs}, and computer science~\citep{Pearl2001DirectEffects}.  The recent application in the internet industry reveals the performance of recommendation system could be cannibalized by search in e-commerce website~\citep{Yin2019TheAnalysis}.  Mediation analysis originates from the seminal paper of~\citet{Baron1986}, where they proposed a parametric estimator based on the linear structural equation model ({\tt LSEM}).  {\tt LSEM}, by far, is still widely used by applied researchers because of its simplicity.  Since then, \citet{Robins1992IdentifiabilityEffects} and \citet{Pearl2001DirectEffects} and other causal inference researchers have formalized the definition of causal mediation and pinpointed assumptions for its identification~\citep{Robins2003,Robins2010AlternativeEffects,Pearl2014} in various complicated scenarios.  The progress features extensive usage of structural equation models and causal diagrams (e.g., {\tt NPSEM-IE} of \citet{Pearl2001DirectEffects} and {\tt FRCISTG} of \citet{Robins2003}). 

As researchers extend the potential outcome framework of \citet{Rubin2003BasicStudies} to causal mediation, alternative identification, and more general estimation strategies have been developed.  \citet{Imai2010} achieved the non-parametric identification and estimation of mediation effects of a single mediator under the assumption of {\tt SI}. After analyzing other well-known models such as {\tt LSEM}~\citep{Baron1986} and {\tt FRCISTG}~\citep{Robins2003}, they concluded that assumptions of most models can be either boiled down to or replaced by {\tt SI}.  However, {\tt SI} is stringent, which ignites many in-depth discussions around it (see, e.g., the discussion between \citet{Pearl2014, Pearl2014ReplyAnalysis} and \citet{Imai2014CommentAnalysis.}). 

Another popular identification strategy of causal mediation is {\tt IV}, which is a signature technique in economics~\citep{Angrist1996IdentificationVariables,Angrist2001InstrumentalExperiments}.  \citet{Sobel2008IdentificationVariables} used treatment as {\tt IV} to identify mediation effects without {\tt SI}. However, as \citet{Imai2010} pointed out, {\tt IV} assumptions may be undesirable because they require all causal effects of the treatment pass through the mediator (i.e., complete mediation~\citep{Baron1986}). \citet{Small2012MediationVariables} proposed a new method to construct {\tt IV} that allows direct effects of the treatment (i.e., partial mediation~\citep{Baron1986}) but assumes that {\tt ADE} of the treatment is the same for different segments of the population.

\subsection{Meta-Analysis}
Our method only uses summary statistics of many past experiments.  Analyzing summarized results from many experiments is termed as meta-analysis and is common in analytical practice~\citep{Cooper2009TheMeta-analysis,Stanley2012Meta-regressionBusiness}.  In the literature, meta-analysis is used for mitigating the problem of external validity in a single experiment and learning knowledge that was hard to recover when analyzing data in isolation, such as heterogeneous treatment effects (see, e.g., \citet{Higgins2002QuantifyingMeta-analysis,Browne2017}).  Besides, a significant advantage of meta-analysis is easy to scale, because it only takes summarized results from many different experiments.

\citet{Peysakhovich} took one step toward the direction of performing mediation analysis using data from many experiments. They used treatment assignments as {\tt IV}s to identify causal mediation, which is similar to \citet{Sobel2008IdentificationVariables}, but lacks the justification why more than one experiment is needed and failed to address limitations of {\tt IV} that we discussed above. Our framework shows that having access to many experiments enables identifying causal mediation without {\tt SI} and overcoming the limitation of {\tt IV}, both of which are hard to achieve with only one experiment.

\section{Conceptual Framework}\label{sec:conceptual_framework}
We follow the literature of potential outcomes~\citep{Rubin2003BasicStudies,Imai2010,Small2012MediationVariables} to set up our framework.

\begin{figure}[h]
    \centering
    
\begin{tikzpicture}
    [
    -Latex,auto,node distance =1 cm and 1 cm,semithick,
    state/.style ={ellipse, draw, minimum width = 0.7 cm},
    point/.style = {circle, draw, inner sep=0.04cm,fill,node contents={}},
    bidirected/.style={Latex-Latex,dashed},
    el/.style = {inner sep=2pt, align=left, sloped}
    ]
    \node[state] (m) at (0,0) {$M$};

    \node[state] (y) [right =of m] {$Y$};
    \node[state] (t1) at ([shift={(m)}] 120:2) {$T^{s_1}$};
    \node[state] (t2) at ([shift={(m)}] 150:2) {$T^{s_2}$};
    \node[state] (t3) at ([shift={(m)}] 180:2) {$T^{s_3}$};
    \node[state] (tN) at ([shift={(m)}] 240:2) {$T^{s_K}$};
    \node[state] (w) at ([shift={(m)}] -60:2){$U$};
    \node (dot1) at ([shift={(m)}] 210:2){\Large.};
    \node (dot2) at ([shift={(m)}] 220:2){\Large.};
    \node (dot3) at ([shift={(m)}] 200:2){\Large .};

    \path[red] (m) edge (y);
    \path (t1) edge (m);
    \path (t2) edge (m);
    \path (t3) edge (m);
    \path (tN) edge (m);
    \path[dashed] (w) edge (m);
    \path[dashed] (w) edge (y);
    \path (t1) edge[bend left=30] (y);
    \path (t2) edge[bend left=30] (y);
    \path (t3) edge[bend left=30] (y);
    \path (tN) edge[bend right=20] (y);
    \end{tikzpicture}
\caption{Directed Acyclic Graph of Conceptual Framework}
\label{fig:framework}
\end{figure}
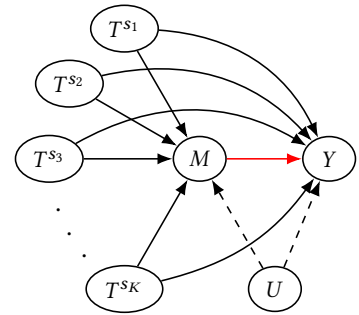

As illustrated in Figure~\ref{fig:framework}, we suppose there are many experiments with different treatments, and there is a mediator $M$ that can be affected by any treatment $T^s$ and will in turn influence outcome $Y$. Each treatment may also affect $Y$ directly. But we are particularly interested in recovering their shared casual channel, the link between $M$ and $Y$, marked in red in the figure.  In the ranking algorithm example, $M$ is online evaluation metric of search page, $Y$ is online business KPI.  A challenge to identify the red link emerges if a confounder $U$ exists.  $U$ could be (user engagement of) any other web-page/module or user preference in the ranking algorithm example. In the literature, there are two approaches to solve this challenge, {\tt SI} and {\tt IV}. {\tt SI} requires that $U$ is observed and measured, and there are no other unmeasured/unobserved confounders after controlling for $U$. The standard {\tt IV} approach allows unmeasured/unobserved $U$, but assumes no direct links between $T^{s}$s and $Y$.  An {\tt IV} method proposed by \citet{Small2012MediationVariables} relaxes requirements of standard approach, and assumes all $T^{s}$s share a single direct link to $Y$. Our method allows unmeasured/unobserved $U$ and direct links from each $T^{s}$ to $Y$.

Suppose there are $K$ randomized trials in total. For each trial, there exists one treatment group and one control group\footnote{Experiments with multiple treatment arms can be considered as multiple trials with one treatment group and one control group.}. To simplify the discussion, here we assume experimental units are first randomly assigned to different trials, and then randomly assigned into the treatment or control group of that trial.  However, our approach {\tt CMMA} allows the same unit to participate in multiple trials.  We will go back to this point in Section~\ref{sec:algorithm}.

We consider the following model for potential outcomes.
\begin{definition}\label{def1}
    \begin{align}
        M_i^{(t,\vect{s})} &= \vect{\tau_i}^{\top}\vect{s}\times t + \vect{\phi}^{\top}\vect{s} +  M_{i}^{*} \label{def1: eq1}\\
        Y_i^{(t,\vect{s},m)}&= \mu_i(m) + \vect{\gamma_i}^{\top}\vect{s}\times t+ \vect{\theta}^{\top}\vect{s} + Y_{i}^{*},\label{def1: eq2}
   \end{align}
   where the ($\vect{\tau_i}$, $\vect{\gamma_i}$, $M_{i}^{*}$, $Y_{i}^{*}$) are i.i.d random vectors,
   for all $t=0 \text{ or } 1$, $\vect{s}\in\mathcal{S}$ and $m \in \mathcal{M}$.
\end{definition}
We use a one-hot vector $\vect{s} = (s_1,\cdots,s_K)$ to encode the trial assignment where $(0,\cdots, s_k=1, \cdots,0)$ indicates the assignment to trial $k$, and a binary variable $t \in \{0, 1\}$ to encode treatment assignment, with $t=1$ if assigned to the treatment group.  For experimental unit $i$ with trial assignment $\vect{s}$ and treatment assignment $t$, the random variable $M_i^{(t,\vect{s})}$ represents the potential mediator, and the random variable $Y_i^{(t,\vect{s}, m)}$ represents the potential outcome that would be observed if $i$ were to receive or exhibit level $m$ of the mediator through some hypothetical mechanism. 

We are interested in mediator {\tt DRF}, which represents the value of $Y_i^{(t,\vect{s}, m)}$ given different values ($m$) of $M_i^{(t,\vect{s})}$ for the same $t$ and $\vect{s}$. This effectively is the red arrow in the Figure~\ref{fig:framework} for individuals.  Let $\mu_i(m)$ be the mediator {\tt DRF} for individual $i$. Our goal is to estimate average mediator {\tt DRF}: $\mathbb{E}[\mu_i(m)]$, with which we can compute the percentage change of $\mathbb{E}[\mu_i(m)]$ for a 10\% increase in $m$ \textit{ceteris paribus}. The expectation here is taken over the population of $i$ and so are other expectations in this paper if not specified. Here we consider polynomial mediator {\tt DRF}: $\mu_i(m) = \sum_{p=1}^{P}\beta_{i,p} m^p$, which can capture the nonlinearity of the causality.  Estimating $\mathbb{E}[\mu_i(m)]$ means to estimate $\beta_p = \mathbb{E}[\beta_{i,p}]$ for $p = 1, 2,\cdots, P$.  Vectors $\vect{\tau_i}$ and $\vect{\gamma_i}$ are in $\mathbb{R}^K$. Each element of the vector, $\tau_{i,k}$ or $\gamma_{i,k}$, represents direct effect of the treatment on mediator or outcome in trial $k$, and is assumed not to depend on $m$. 

Vectors $\vect{\phi}$ and $\vect{\theta}$ are also in $\mathbb{R}^K$, representing trial fixed effects. Random variables $M_{i}^{*}$ and $Y_{i}^{*}$ represent idiosyncratic individual characteristics in the values of potential mediator and potential outcome. Assume $\mathbb{E}[M_{i}^{*}] = \mathbb{E}[Y_{i}^{*}] = 0$ \footnote{This assumption can always be satisfied by reparameterizing $M_{i}^{*}, Y_{i}^{*}$ and $\vect{\phi}, \vect{\theta}$.}. Let $\mathcal{M}$ be the support of the distribution of mediator, and $\mathcal{S}$ be the set containing all possible trial assignments.

We only observe realized data $\{M_i, Y_i, T_i, \vect{S}_i\}$. The observed mediator $M_i := M_i^{(T_i,\vect{S}_i)}$, and observed outcome $Y_i := Y_i^{(T_i,\vect{S}_i,M_i)}$.

The specification has two implications. First, it implies that being in a particular trial will not affect mediator {\tt DRF}. If mediator {\tt DRF} is not trial independent, then having many trials only adds noises rather than provides additional information, and thus defeats the purpose of conducting a meta-analysis.
\begin{remark}[Trial-Irrelevant Mediator {\tt DRF}]\label{assump:trial_independent}
\begin{align}
    Y_i^{(t,\vect{s},m')}-Y_i^{(t,\vect{s},m)} = \mu_i(m')-\mu_i(m), 
\end{align}
for all $t=0 \text{ or }1$, $m', m\in\mathcal{M}$ and $\vect{s}\in\mathcal{S}$, where $\mu_i(m')-\mu_i(m)$ does not depend on $s$.
\end{remark}
Second, the specification implies that there are no interaction effects between treatment and mediator on the outcome.  It means the individual direct effect in each trial is irrelevant of the value of mediator. It is common in the literature of causal mediation (see, e.g., {\tt NPSEM-IE} of \citet{Pearl2001DirectEffects} and {\tt FRCISTG} of \citet{Robins2003}). 
\begin{remark}[No-Interaction \citet{Pearl2001DirectEffects,Robins2003}]\label{assump:no-interaction}
\begin{align}
    Y_i^{(1,\vect{s},m)}-Y_i^{(0,\vect{s},m)} =\vect{\gamma_i}^{\top}\vect{s}, 
\end{align}
for all $m\in\mathcal{M}$ and $\vect{s}\in\mathcal{S}$, where $\vect{\gamma_i}$ does not depend on $m$.
\end{remark}
\begin{figure}
    \includegraphics[width = 0.25\textwidth]{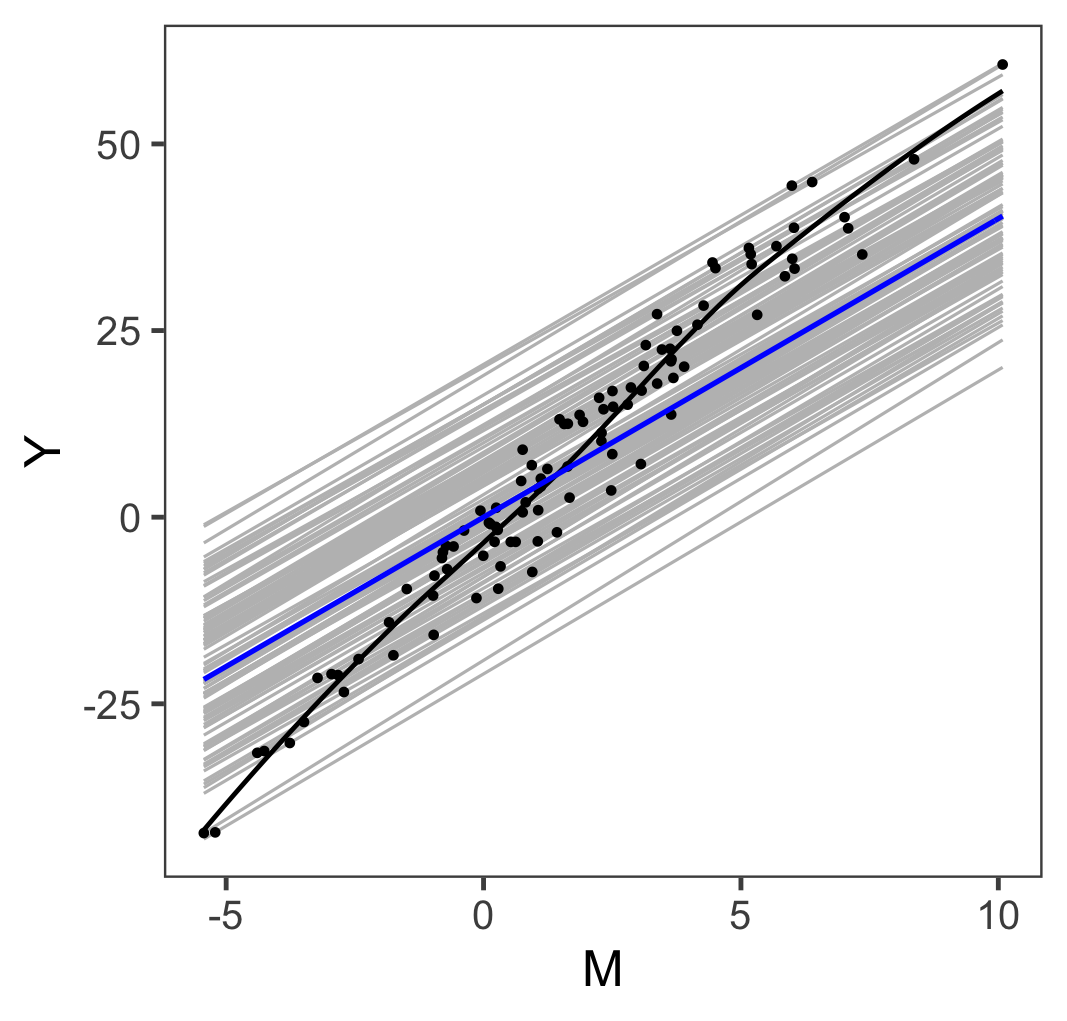}
    \caption{Mediator {\tt DRF} and Observed Data in Control Group}
    \label{fig:po}
\end{figure}

When mediator $M_i$ is not independent of $Y_i$ (which is the case here), it is not trivial to recover $\mathbb{E}[\mu_i(m)]$ through observed data. 
Figure~\ref{fig:po} illustrates the challenge using simulated data. The grey lines in Figure~\ref{fig:po} are simulated individual mediator {\tt DRF}s following Definition~\ref{def1}, which represents the true causality between $M_i$s and $Y_i$s. The blue line is the average mediator {\tt DRF}. After randomly assign individuals to a trial, we can compute their observed mediator values $M_i$ and outcome values $Y_i$ when in the control group, which are depicted by black scattered points. The black line shows the result from fitting the observed points by a widely-used non-parametric machine learning algorithm, locally estimated scatterplot smoothing (LOESS). Although the black line fits the data almost perfectly, it significantly deviates from the true underlying causality: the average mediator {\tt DRF}.

\section{Identification}\label{sec:identification}

\subsection{Random Assignment}
We first formally define the assumption that trial assignment and treatment assignment are random. 
\begin{assumption}[Random Assignment]\label{assump:random-ts}
\begin{align}
\left\lbrace Y_i^{(t',\vect{s}, m)}, M_i^{(t,\vect{s})}\right\rbrace &\independent \vect{S_i} \label{eq:random-S}& \\
\left\lbrace Y_i^{(t',\vect{s}, m)}, M_i^{(t,\vect{s})}\right\rbrace &\independent T_i | \vect{S}_i=\vect{s} \label{eq:random-T}&
\end{align}
for $t,t' = 0 \text{ or }1$, all $m \in \mathcal{M}$ and $\vect{s} \in \mathcal{S}$ and it is also assumed that $0<\mathbb{P}(T_i=t|\vect{S}_i=\vect{s})<1 \text{ and }0<\mathbb{P}(M_i(t)=m \vert  T_i=t, \vect{S}_i=\vect{s})<1$ for $t = 0 \text{ or }1$, all $m \in \mathcal{M}$ and $\vect{s} \in \mathcal{S}$. 
\end{assumption}

Equation~\ref{eq:random-T} can be guaranteed by random assignment of treatments in online A/B tests.  Equation~\ref{eq:random-S} means that individual's potential outcomes and potential mediators are independent of the trial assignment. In practice, users are randomly selected into A/B tests, thus this assumption is trivially satisfied \footnote{This is true regardless of whether the same unit can participate in multiple trials.}.
\subsection{Relaxed Sequential Ignorability}
\begin{assumption}[Relaxed Sequential Ignorability]\label{assump:relaxed-SI}
\begin{align}
 Y_i^{(t,\vect{s}, m')}-Y_i^{(t,\vect{s}, m)} \independent 
 M_i| T_i=t, \vect{S}_i=\vect{s} \label{eq:relaxed-SI}&
\end{align}
for all $m, m'\in\mathcal{M}$, all $\vect{s}\in\mathcal{S}$, $t=0,1$.
\end{assumption}
With Definition~\ref{def1} and assumption that $\mu_i(m)$ is a polynomial function, this assumption is equivalent to $$\beta_{i,1},\cdots, \beta_{i,P} \independent
M_i | T_i=t, \vect{S}_i=\vect{s}$$ for all $\vect{s}\in\mathcal{S}$, $t=0,1$.
This assumption says that the effect of changing $m$ on the outcome of $i$ is independent of the idiosyncratic individual unobservable that affects $M_i$. A similar assumption is proposed by \citet[(IV-A3)]{Small2012MediationVariables}. It means that the underlying causality between online product and the business is invariant even we observe some users produce higher online metrics than others for unknown reasons (i.e., the idiosyncratic unobservable).  It is weaker than the {\tt SI}. Whenever {\tt SI} is satisfied, this assumption is naturally satisfied.\footnote{If  $Y_i^{(t,\vect{s}, m)}\independent M_i| T_i=t, S_i=s$ for all $m\in\mathcal{M}$ then naturally $Y_i^{(t,\vect{s}, m')}-Y_i^{(t,\vect{s}, m)}$ is also independent of  $M_i$ given $T_i=t, S_i=s$.} But the inverse is not true. It is possible to break {\tt SI} and still fulfill this assumption. For example, when 
the mediator {\tt DRF} is the same for all individuals and potential outcomes only depends on unobserved confounders, {\tt SI} could be violated, while this assumption is still satisfied.

\subsection{Trial-level Conditional Independence}
In order to allow the presence of many direct treatment effects, we put some structure on the direct treatment effects. Let $H_{k}$ represents a vector of trial characteristics for trial $k$.
\begin{assumption}[Trial-Level Conditional Independence]\label{assump:trial_independent}
    We assume $\gamma_{i,k}|H_k$'s are independently and identically distributed with\\
    $\mathbb{E}[\gamma_{i,k}|H_{k}] = H_{k}^{\top}\vect{\pi}$  and $\gamma_{i,k} \independent  \tau_{i,k}\;| H_{k}$ for all $i$.
\end{assumption}
This assumption allows correlation between individual direct treatment effects on outcome and on mediator in a trial, while assuming such correlation disappears once conditioned on the characteristics of the trial. In the example of ranking algorithm, we may believe that experiments that test new algorithms generally have high impacts on both online {\tt NDCG} and {\tt GMV} whereas experiments that test new UI designs generally have only modest impacts on both metrics. But, within the same type of experiments, how much a treatment affects online {\tt NDCG} does not correlate with its effect on {\tt GMV}. This assumption is unverifiable. However, it is weaker than standard {\tt IV} assumption in the literature.  If there were no direct effects ({\tt IV} assumption), then this assumption is trivially satisfied.  

We can stack vectors of trial characteristics of all the $K$ trials into a matrix $\vect{H} = \begin{bmatrix}
    H_1 & H_2 & \hdots& H_K
\end{bmatrix}^{\top}$. Assumption~\ref{assump:trial_independent} implies
\begin{align}
\mathbb{E}[\vect{\gamma}_i|\vect{H}] = \begin{bmatrix}
    \mathbb{E}[\gamma_{i,1}|H_1]& \mathbb{E}[\gamma_{i,2}|H_2] & \hdots& \mathbb{E}[\gamma_{i,K}|H_K]
\end{bmatrix}^{\top} = \vect{H}\vect{\pi}\label{eq:projection}
\end{align}
for all $i$.
\subsection{Relevance Condition}
\begin{assumption}[Relevance Condition]\label{assump:first-stage}
    \begin{align}
    Var(\tau_{k}|H_k)>0;
    \end{align}
\end{assumption}
This assumption means, for the same type of experiments, direct treatment effect on $M$ ($\tau_{k}$) varies between experiments.  It implies that treatment in each trial is still helpful for predicting $M$ after conditioning on trial-level covariates.  This assumption is similar to the standard rank condition of IV identification (see \citet[Chapter 5]{Wooldridge2010EconometricData}).
Because $\tau_k$ can be calculated easily by summary statistics, this assumption can be empirically verified. In practice, since we can decide which trials to be included into the analysis, we can make sure this assumption always holds.

\subsection{Identification of Mediator {\tt DRF}}
To simplify the proof, let's assume $\mu_i(m) = \beta_i m$. The same proof works for the more general polynomial {\tt DRF}.
Based on Definition~\ref{def1}, random variables of observed mediator $M_i$ and observed outcome $Y_i$ can be written as
\begin{align}
    M_i&= (\vect{S}_{i}T_{i})^{\top}\vect{\tau} + \vect{S}_{i}^{\top}\vect{\phi} + \eta_i, \label{eq: obs_m}\\
    Y_i&= M_{i}\beta  + \vect{S}_{i}^{\top}\vect{\theta} + (\vect{S}_iT_{i})^{\top}\vect{\gamma}+ \epsilon_i, \label{eq: obs_y}
\shortintertext{where}
    &\eta_i =(\vect{S_i}T_{i})^{\top}(\vect{\tau_i}-\vect{\tau}) + M_{i}^{*},\nonumber \\
    &\epsilon_i = (\vect{S_i}T_{i})^{\top}(\vect{\gamma}_i-\vect{\gamma}) + M_{i}(\beta_i-\beta) + Y_{i}^{*},\nonumber \\
    &\beta = \mathbb{E}[\beta_i], \vect{\tau}=\mathbb{E}[\vect{\tau}_i], \text{and } \vect{\gamma} = \mathbb{E}[\vect{\gamma}_i]. \nonumber
\end{align}

By plugging Equation~\ref{eq: obs_m} into Equation~\ref{eq: obs_y} for $M_i$, we obtain
\begin{align}
    Y_i = (\vect{S_i}T_{i})^{\top}\vect{\tau}\beta  + \vect{S}_{i}^{\top}(\vect{\theta}+\vect{\phi}\beta) + (\vect{S}_iT_{i})^{\top}\vect{H}\vect{\pi} + \epsilon_i' \label{eq: iv_obs_y1}
\end{align}
where $\epsilon_i' = (\vect{S}_iT_{i})^{\top}(\vect{\gamma}_i - \vect{H}\vect{\pi})+ \eta_i\beta  + M_{i}(\beta_i-\beta) + Y_{i}^{*} $.

With all the specifications and assumptions, we are ready to present the important result about the identification.
\begin{theorem}\label{thm:identification}
    Consider the model specified in Definition~\ref{def1},  Under Assumption~\ref{assump:random-ts}~-~\ref{assump:first-stage}, the average mediator effect $\beta = \mathbb{E}[\beta_{i}]$ can be identified.  A consistent estimator of $\beta$ can be derived through a procedure of two-stage least squares (2SLS):
    \begin{enumerate}
        \item Following Equation~\ref{eq: obs_m}, regress $M_i$ on $\vect{S}_{i}T_i$ and $\vect{S}_{i}$ via least squares to obtain a consistent estimator $\widehat{\vect{\tau}}$;
        \item Plug the consistent estimator $\widehat{\vect{\tau}}$ into Equation~\ref{eq: iv_obs_y1}, and regress $Y_i$ on $(\vect{S_i}T_{i})^{\top}\widehat{\vect{\tau}}$, $\vect{S}_{i}$, and $(\vect{S_i}T_{i})^{\top}\vect{H}$ via least squares to obtain the consistent estimator $\widehat{\beta}$.
    \end{enumerate}
\end{theorem}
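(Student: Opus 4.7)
The plan is to verify consistency of the two-stage procedure stage by stage, checking for each OLS regression that its error is conditionally mean zero given the regressors and that the regressor matrix has full column rank in the limit. Throughout I would exploit the one-hot structure of $\vect{S}_i$: the indicator $\vect{S}_i T_i$ marks the trial-treatment cell, so $(\vect{S}_i T_i)^{\top}\vect{\tau} = T_i\tau_{k(i)}$ and $(\vect{S}_i T_i)^{\top}\vect{H} = T_i H_{k(i)}^{\top}$, where $k(i)$ is the trial to which unit $i$ is assigned.

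For stage 1, consider Equation~\ref{eq: obs_m}. The error $\eta_i = (\vect{S}_i T_i)^{\top}(\vect{\tau}_i - \vect{\tau}) + M_i^{*}$ has unconditional mean zero by construction, and Assumption~\ref{assump:random-ts} (random assignment) gives $(\vect{\tau}_i, M_i^{*}) \independent (\vect{S}_i, T_i)$, so $\mathbb{E}[\eta_i \mid \vect{S}_i, T_i] = 0$. The positivity part of Assumption~\ref{assump:random-ts} ensures that the population moment matrix over $\{\vect{S}_i T_i, \vect{S}_i\}$ has full column rank, and standard OLS consistency then yields $\widehat{\vect{\tau}} \stackrel{p}{\to} \vect{\tau}$.

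For stage 2, I need $\mathbb{E}[\epsilon_i' \mid \vect{S}_i, T_i, \vect{H}] = 0$ term by term in $\epsilon_i' = (\vect{S}_i T_i)^{\top}(\vect{\gamma}_i - \vect{H}\vect{\pi}) + \eta_i\beta + M_i(\beta_i - \beta) + Y_i^{*}$. The first term vanishes by Assumption~\ref{assump:trial_independent} combined with Equation~\ref{eq:projection}; the second by the stage-1 orthogonality; the last by $\mathbb{E}[Y_i^{*}] = 0$ and random assignment. The subtle piece is $M_i(\beta_i - \beta)$: under Assumption~\ref{assump:relaxed-SI} we have $\beta_i \independent M_i \mid T_i, \vect{S}_i$, so the conditional expectation factorizes as $\mathbb{E}[M_i \mid T_i, \vect{S}_i]\,\mathbb{E}[\beta_i \mid T_i, \vect{S}_i]$; random assignment then gives $\mathbb{E}[\beta_i \mid T_i, \vect{S}_i] = \beta$, killing the term. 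The stage-2 rank condition requires that $(\vect{S}_i T_i)^{\top}\vect{\tau}$ not lie in the linear span of $\{\vect{S}_i, (\vect{S}_i T_i)^{\top}\vect{H}\}$ in the population moments, which is exactly what Assumption~\ref{assump:first-stage} supplies by forcing $\tau_k$ to vary beyond what $H_k$ alone can explain across trials.

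The principal obstacle I expect is formally handling the generated regressor $(\vect{S}_i T_i)^{\top}\widehat{\vect{\tau}}$ in stage 2. The cleanest way around this is to observe that the two-step procedure is algebraically the 2SLS estimator that uses $\vect{S}_i T_i$ as instruments for $M_i$ with exogenous controls $\vect{S}_i$ and $(\vect{S}_i T_i)^{\top}\vect{H}$, so consistency follows from standard 2SLS results such as \citet[Ch.~5]{Wooldridge2010EconometricData} once the orthogonality and relevance conditions above are established. Alternatively, one can invoke the continuous mapping theorem: $\widehat{\beta}$ is a continuous function of $\widehat{\vect{\tau}}$ and sample second moments, all of which converge in probability to their population counterparts, so $\widehat{\beta} \stackrel{p}{\to} \beta$.
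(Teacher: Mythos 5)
Your proposal is correct and follows essentially the same route as the paper's proof: establish orthogonality of the composite error $\epsilon_i'$ with each regressor block term by term — random assignment handling $\eta_i$ and $Y_i^{*}$, Assumption~\ref{assump:relaxed-SI} handling $M_i(\beta_i-\beta)$ via the factorization of the conditional expectation, Assumption~\ref{assump:trial_independent} handling $(\vect{S}_iT_i)^{\top}(\vect{\gamma}_i-\vect{H}\vect{\pi})$ by iterated expectations on $H_k$, and Assumption~\ref{assump:first-stage} supplying the rank condition — and then invoke standard 2SLS consistency, which is also how the paper disposes of the generated-regressor issue. The only cosmetic difference is that you argue via conditional mean independence given the regressors while the paper computes the unconditional cross-moments $\mathbb{E}[\matr{X_i}^{\top}\epsilon_i']=\vect{0}$ directly; your treatment of the $M_i(\beta_i-\beta)$ term is, if anything, slightly more careful.
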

The proof is in Appendix~\ref{appendix:proof-thm}. Since $\vect{\tau}$ can be identified from \ref{eq: obs_m}, we can use $\vect{\widehat{\tau}}$ in place of $\vect{\tau}$. The proof shows that under Assumption~\ref{assump:random-ts}~-~\ref{assump:first-stage}, the covariances between $\epsilon_i'$ and all the covariates, $\vect{S}_{i}T_i\widehat{\tau}$, $\vect{S}_{i}$ and $\vect{S}_{i}T_{i}^{\top}\vect{H}$ in Equation~\ref{eq: iv_obs_y1} are zeros. Therefore, structural parameters $\beta$ can be identified (see, e.g., \citet[Chapter 4]{Wooldridge2010EconometricData} for more details of coefficient identification in linear regression).

The estimator $\widehat{\beta}$ from 2SLS of Theorem~\ref{thm:identification} is equivalent to an {\tt IV}-2SLS estimator (see, e.g., \citet[Chapter 5]{Wooldridge2010EconometricData} for more details of {\tt IV}-2SLS).  To see it, let's rewrite Equation~\ref{eq: obs_y} as
\begin{align}
    Y_i = M_i\beta  + \vect{S}_{i}^{\top}(\vect{\theta}+\vect{\phi}\beta) + (\vect{S}_iT_{i})^{\top}\vect{H}\vect{\pi} + \epsilon_i'' \label{eq: iv_obs_y2}
\end{align}
where $\epsilon_i'' = (\vect{S}_iT_{i})^{\top}(\vect{\gamma}_i-\vect{H}\vect{\pi}) + M_{i}(\beta_i-\beta) + Y_{i}^{*}$.  We could get the same estimator of $\beta$ as in Theorem~\ref{thm:identification} through applying {\tt IV}-2SLS on Equation~\ref{eq: iv_obs_y2} and using $\vect{S}_{i}T_{i}$ as instruments for $M_i$. 


%
\section{Estimation and Hypothesis Testing}\label{sec:estimation}
\subsection{Estimation}\label{sec:algorithm}

Theorem~\ref{thm:identification} implies that average mediator effect can be estimated by running two regressions with pooled data.  Such estimation could be very costly when the sample size in each trial is huge so that pooling data from all trials becomes infeasible. We propose a simpler two-stage procedure: {\tt CMMA}.
\begin{algorithm}[H]
\caption{{\tt  CMMA}\label{algorithm: CMMA}}
\begin{algorithmic}[1]
\Input{$Y_i$, $T_i$, and $M_i$, $\vect{S_i}$, $H_k$}
\Output $\widehat{\beta}$

\For{trial $k=1$ to $K$}
\State Estimate {\tt ATE} of trial $k$ treatment on $Y$ using data from trial $k$ and denote it as ${\tt ATE}^{Y}_k$ .
\State Estimate {\tt ATE} of trial $k$ treatment on $M$ using data from trial $k$ and denote it as ${\tt ATE}^{M}_k$. 
\EndFor
\State Regress ${\tt ATE}^{Y}_k$ on ${\tt ATE}^{M}_k$ and additional trial-level covariates $H_k$.
\State Save the coefficient for ${\tt ATE}^{M}_k$ as $\widehat{\beta}$.
\State \Return{$\widehat{\beta}$}
\end{algorithmic}
\end{algorithm}
Step~1-4 of {\tt CMMA} calculates {\tt ATE}s for each trial. {\tt ATE} on $Y$ and $M$ could be done in the standard procedure of online A/B tests.  Step~5 and~6 uses only trial-level summary statistics and covariates, making it very easy to implement. This  estimator has the same identification strategy as in Theorem~\ref{thm:identification} and is equivalent to a weight-adjusted 2SLS estimator. The proof is in Appendix~\ref{appendix:proof-iv-equivalent}.

Note that, {\tt CMMA} allows the same unit to participate in multiple trials.  We can always use regression/ANOVA with treatment interaction terms to estimate {\tt ATE}s of each trial for units in multiple trials, and then implement Step~5 and~6 of {\tt CMMA} on estimated {\tt ATE}s to get $\widehat{\beta}$. 

The most challenging part of applying {\tt CMMA} is finding valid trial-level characteristics $H_k$ to satisfy Assumption~\ref{assump:trial_independent}. A good $H_k$ should have explanatory power for treatment effects on outcome and mediator across trials.  However, similar to finding a valid instrument, there is no systematic way to produce $H_k$. Practitioners have to rely on available data and domain knowledge to argue for the validity of $H_k$. In Section~\ref{sec:simulation} Table~\ref{tab:assumption_violation}, we simulate the consequences of violating Assumption~\ref{assump:trial_independent}. In Section~\ref{sec:data} Figure~\ref{fig:ATE_NDCG}, we discuss the choice of $H_k$ in our real data application.  Future work is required on the sensitivity of {\tt CMMA} to $\matr{H}$.

\subsection{Hypothesis Testing}
In general, the reported standard errors from the second stage regression is slightly different from theoretical values without access to residuals in the first stage. But this becomes less of an issue as sample size increases.  Since the sample size is usually enormous in online A/B tests, we recommend using the reported standard errors in the second stage regression for convenience. 

Although we have assumed that $\mu_i(m) = \beta_i m$ for discussion to this point, the same proof is still valid for polynomial {\tt DRF}, $\mu_i(m) = \sum_{p=1}^{P}\beta_{i,p} m^p$. Let $\overline{\vect{M}_i} = [M_i,\cdots, M_i^{p},\cdots,M_i^{P}]^{\top}$ and $\overline{\vect{\beta}} = [\beta_1,\cdots,\beta_{p},\cdots,\beta_{P}]^{\top}$, where $\beta_{p} = \mathbb{E}[\beta_{i,p}]$. Then we can use $\overline{\vect{M}_i}^{\top}\overline{\vect{\beta}}$ in place of $M_i\beta$ in the proof. For our algorithm, in addition to ${\tt ATE}^{Y}_k$ and ${\tt ATE}^{M}_k$, we also need to estimate {\tt ATE} on the higher order terms of $M_i$, ${\tt ATE}^{(M^{p})}_k, p = 2,3,\cdots,P$. \footnote{Note: We need ${\tt ATE}^{(M^{p})}_k$, not the $p$-th order of ${\tt ATE}^{M}_k$: $({\tt ATE}^{M}_k)^{p}$. ${\tt ATE}^{(M^{p})}_k \neq ({\tt ATE}^{M}_k)^{p}$.}

To decide the highest $p$-th order term to include in the model of the second stage, we can use the common model selection tool: Wald test. The standard way is to run a regression with higher-order terms and then perform a series of tests to check whether coefficients of those terms are zeros. See \citet[Chapter 5]{Greene2011EconometricAnalysis} for more technical discussions on Wald test.

\section{Simulation}\label{sec:simulation}
We conduct Monte Carlo simulations to study the finite sample performance of our estimator. The details of our simulation set-up are described in Appendix~\ref{appendix:simulation}. The R code is available in our GitHub repository: \url{https://github.com/znwang25/cmma}.

\begin{table}[!htb]
    \caption{Finite-Sample Performance Comparison}\label{tab:estimator_performance}
        \resizebox{0.47\textwidth}{!}{%
    \begin{tabular}{lcccccc}
        \toprule
        \multicolumn{1}{c}{ } & \multicolumn{2}{c}{$N_{per}=200$} & \multicolumn{2}{c}{$N_{per}=500$} & \multicolumn{2}{c}{$N_{per}=1000$} \\
        \cmidrule(l{3pt}r{3pt}){2-3} \cmidrule(l{3pt}r{3pt}){4-5} \cmidrule(l{3pt}r{3pt}){6-7}
        Estimator & Bias & \makecell{ 95\% CI \\coverage } & Bias & \makecell{ 95\% CI \\coverage } & Bias & \makecell{ 95\% CI \\coverage }\\
        \midrule
        LIML & 0.003 & 93\% & -0.002 & 94\% & 0.000 & 94\%\\
        {\tt CMMA} & 0.058 & 67\% & 0.020 & 88\% & 0.011 & 88\%\\
        Full Sample 2SLS & 0.058 & 66\% & 0.020 & 83\% & 0.011 & 86\%\\
        \citet{Sobel2008IdentificationVariables} & 0.308 & 3\% & 0.279 & 3\% & 0.283 & 3\%\\
        LSEM \citep{Baron1986} & 0.937 & 0\% & 0.936 & 0\% & 0.937 & 0\%\\
        \bottomrule
        \end{tabular}
        }
\end{table}

We use Limited Information Maximum Likelihood (LIML) estimator as a benchmark to evaluate our {\tt CMMA} estimator.  An LIML estimator that is specified according to the simulation setup should have the best performance theoretically. We include two common estimators in the literature of mediation analysis into the comparison: \citet{Sobel2008IdentificationVariables} and LSEM~\citep{Baron1986}, which are derived under identification approaches discussed in Section~\ref{sec:literature_review}. \citet{Sobel2008IdentificationVariables} assumes complete mediation, and LSEM~\citep{Baron1986} relies on SI assumption~\citep{Imai2010}. Both assumptions are false under our setting. We also implement the Full Sample 2SLS estimator prescribed in Theorem~\ref{thm:identification}, which should produce similar results to {\tt CMMA}.

\begin{table}[h]
    \caption{Assumption Violation}\label{tab:assumption_violation}
    \resizebox{0.47\textwidth}{!}{%
    \begin{tabular}{lcccccc}
    \toprule
    \multicolumn{1}{c}{ } & \multicolumn{2}{c}{$N_{per}=200$} & \multicolumn{2}{c}{$N_{per}=500$} & \multicolumn{2}{c}{$N_{per}=1000$} \\
    \cmidrule(l{3pt}r{3pt}){2-3} \cmidrule(l{3pt}r{3pt}){4-5} \cmidrule(l{3pt}r{3pt}){6-7}
    Violation & Bias & \makecell{95\% CI \\coverage} & Bias & \makecell{95\% CI \\coverage} & Bias & \makecell{95\% CI \\coverage}\\
    \midrule
    None & 0.058 & 67\% & 0.020 & 88\% & 0.011 & 88\%\\
    A\ref{assump:relaxed-SI} & 0.065 & 65\% & 0.024 & 87\% & 0.009 & 93\%\\
    A\ref{assump:trial_independent} & 0.483 & 0\% & 0.470 & 0\% & 0.460 & 0\%\\
    A\ref{assump:first-stage} & 0.949 & 0\% & 0.931 & 0\% & 0.942 & 0\%\\
    \bottomrule
    \end{tabular}%
    }
\end{table}

\begin{table*}[!htb]
    \caption{Model Selection and Wald Tests}\label{tab:model_selection}\centering
        \resizebox{0.8\textwidth}{!}{%
    \begin{threeparttable}
        \begin{tabular}{lrrrrl}
            \toprule
            \multicolumn{1}{c}{ } & \multicolumn{3}{c}{\makecell{\% of Wald Tests Rejecting $H_0$ }} & \multicolumn{2}{c}{{\tt CMMA} Estimation} \\
            \cmidrule(l{3pt}r{3pt}){2-4} \cmidrule(l{3pt}r{3pt}){5-6}
            \makecell{$\mu(m)$: $\beta_1m+\beta_2 m^2 + \beta_3 m^3$ } & $H_0: \beta_3 = 0$ & $H_0: \beta_2 = \beta_3 = 0$ & $H_0: \beta_1=\beta_2=\beta_3= 0$ & \makecell{Highest Order of ${\tt ATE}^{(M^p)}_k$} & Estimated $\beta$'s of $\mu(m)$\\
            \midrule
            $4m$ & 3\% & 5\% & 100\% & 1 & 4.016\\
            $4m + 2m^2$ & 3\% & 100\% & 100\% & 2 & 4.019, 1.999\\
            $4m + 0m^2 + 5m^3$ & 100\% & 100\% & 100\% & 3 & 4.022, -0.001, 5\\
            \bottomrule
            \end{tabular}
            \begin{tablenotes}
                \item Note: The sample size per trial is 1000 and the number of trials is 100. $H_0$ is rejected if p-value $< 0.05$.
            \end{tablenotes}    
    \end{threeparttable}
    }
    \end{table*}
We set the sample size per trial ($N_{per}$), to 200, 500, and 1000 and perform 100 simulations for each setting. Table~\ref{tab:estimator_performance} reports average biases and  95\% confidence interval coverage of the estimators. The performance of {\tt CMMA} is quite good and largely comparable to LIML's result. When the sample size per trial is small, our estimator is slightly biased but the bias is much smaller than those of Sobel and LSEM estimators. As $N_{per}$ increases, the bias of our estimator decreases to zero, whereas the biases of the Sobel and LSEM estimators remain roughly the same. As the sample size is usually enormous in A/B tests (on average, one trial has millions of observations in the real data we obtained from an internet company), the bias of {\tt CMMA} will be negligible in practice. Table~\ref{tab:estimator_performance} also shows that, as $N_{per}$ increases, the 95\% confidence interval coverage of our estimator converges to the nominal coverage. This means that the OLS variance estimated in the trial-level regression is valid for hypothesis testing when $N_{per}$ is sufficiently large. In addition, the point estimate of Full Sample 2SLS estimator is numerically equal to {\tt CMMA}, which empirically validates the equivalence claim made in Section~\ref{sec:algorithm}.

In Table~\ref{tab:assumption_violation}, we examine the performance of {\tt CMMA} when each assumption fails. The results show that, the failure of Assumption~\ref{assump:relaxed-SI} does not seem to affect the estimator's unbiasedness, whereas with failures of Assumption~\ref{assump:trial_independent} or~\ref{assump:first-stage}, {\tt CMMA} is no longer unbiased. Comparing results across rows, it seems to suggest that violating Assumption~\ref{assump:first-stage} has worse consequence in terms of bias. Fortunately, Assumption~\ref{assump:first-stage} is testable as discussed in Section~\ref{sec:identification}.

We also test the performance of the Wald test with a higher order of $\mu(m)$. Table~\ref{tab:model_selection} shows that Wald tests can successfully select the correct model. For example, in the second row of Table~\ref{tab:model_selection}, the true mediator {\tt DRF} is a quadratic function of $m$: $4m + 2m^2$.  Wald tests in all simulations reject the null hypothesis that coefficients for the second-degree term and the third-degree term are zeros, whereas 97\% of simulations fail to reject the hypothesis that coefficient for the third-degree term is zero.  The result suggests that the highest order of ${\tt ATE}^{(M^p)}_k$ should be 2. The last column of Table~\ref{tab:wald_res} shows that, with a correctly specified model, we can accurately estimate all the underlying parameters. 

\section{Application}\label{sec:application}
We apply the approach on three most popular rank-aware evaluation metrics: {\tt NDCG}, {\tt MRR}, and {\tt MAP}, to show, for ranking algorithms that power search page of Etsy.com, which one could lead to the most significant lift of sitewide {\tt GMV}.  Since the offline A/B test literature~\citep{Gilotte2018OfflineSystems} bridges the inconsistency between changes of offline and online evaluation metrics, we only focus on, how sitewide {\tt GMV} would change for 10\% lifts in online {\tt NDCG}, {\tt MRR}, and {\tt MAP} of search page respectively.  All metrics in the application, unless otherwise noted, are \textit{online} metrics.  Please note the approach has not been deployed in Etsy.
This work is not intended to apply to, nor is it a prediction of, actual live performance metrics or performance changes on Etsy or any other property.
\subsection{User-Level Evaluation Metrics}\label{section: user-level-evaluation-metrics}
We follow the offline A/B test literature~\citep{Gilotte2018OfflineSystems} and define the three online rank-aware evaluation metrics at the user level.  Although the three metrics are originally defined at the query level in the test collection evaluation of information retrieval (IR) literature, the search page in the industry is an online product for users and thus the computation could be adapted to the user level.  More specifically, the three metrics are constructed as follows: 1) query-level metrics are computed using rank positions on search page and user conversion status as binary relevance, and non-conversion associated queries have zero values\footnote{If the user purchases the item that she has clicked on the search page, the relevance is 1; otherwise 0.}; 2) user-level metrics is the average of query-level metrics across all queries the user issues (including non-conversion associated queries), and users who do not search or convert have zero values.  Also, all the three metrics are defined at rank position 48, the lowest position of the first page of search results in Etsy.com.
\subsection{Data}\label{sec:data}
We have access to summarized results of 190 randomly selected experiments from the online A/B test platform of Etsy.com.  All the experiments in the data have the user as an experimental unit.  The data include descriptive information about each experiment such as the tested product change, the product team that initiated the experiment, and summary statistics of each experiment such as average (user-level) {\tt NDCG} per user in treatment and control groups.  Note that, the difference between the average metric per user in treatment and control groups is {\tt ATE} on the metric (Step~1~-~4 in Algorithm~\ref{algorithm: CMMA}).

We use the taxonomy of product teams as our trial-level covariates $\matr{H}$ because team taxonomy is quite informative of experiments as Figure~\ref{fig:ATE_NDCG} suggests.  Figure~\ref{fig:ATE_NDCG} shows the density of {\tt ATE} on {\tt NDCG} for selected teams including UI/UX design, marketing, search ranking. First, it is evident that distributions of {\tt ATE} on {\tt NDCG} vary by team, which implies that Assumption~\ref{assump:trial_independent} is likely to be satisfied.  In particular, most experiments of the search ranking team post positive gains on {\tt NDCG}, whereas experiments of the UI/UX team barely affects {\tt NDCG}.
Second, within each team there are significant variances of {\tt ATE} on {\tt NDCG}, which suggests that Assumption~\ref{assump:first-stage} holds. Due to page limitation, only results for {\tt NDCG} are presented here, but the distributions of {\tt ATE} on {\tt MAP} and {\tt MRR} exhibit the similar pattern.
\begin{figure}[h]
    \includegraphics[width = 0.3\textwidth]{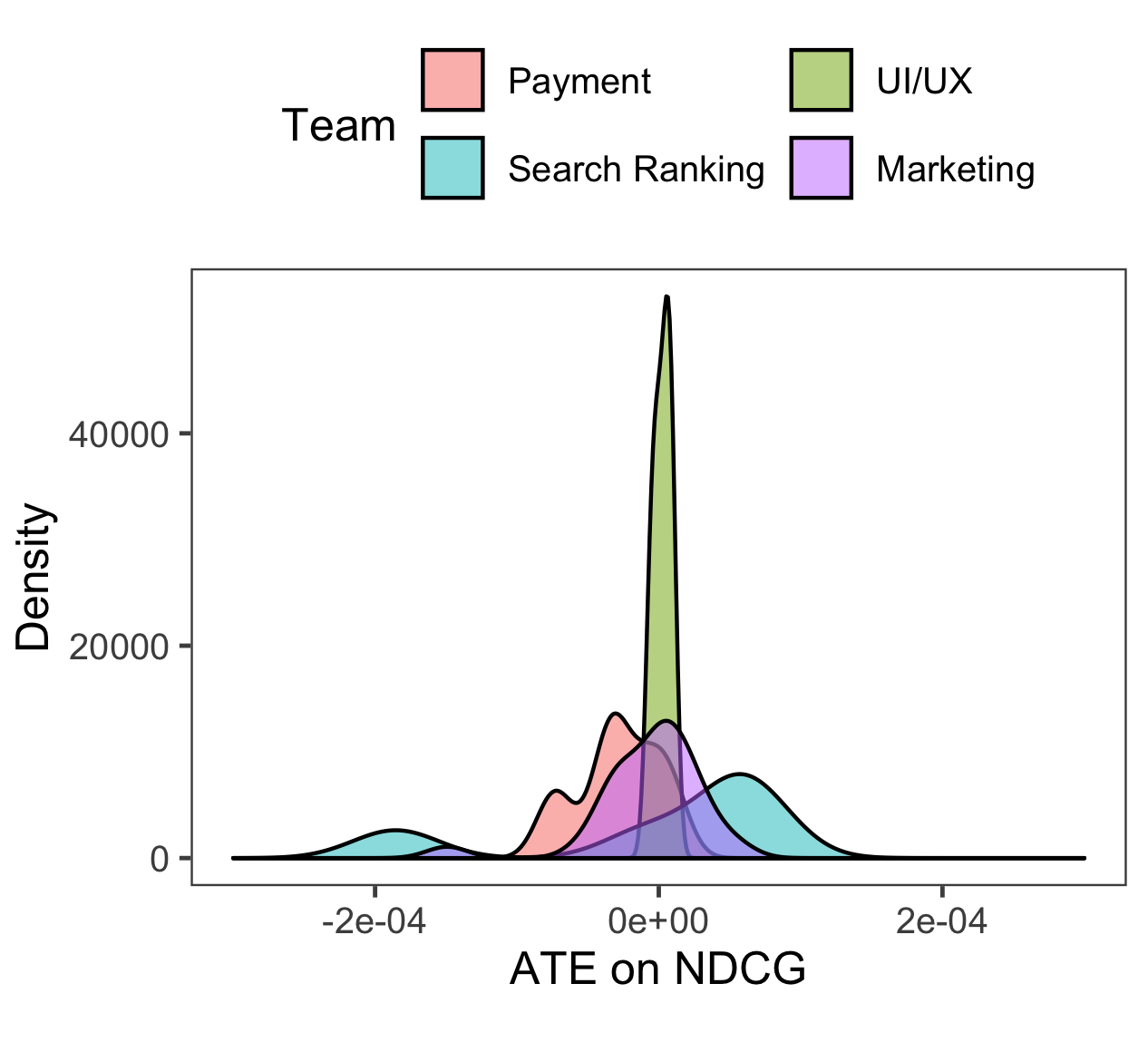}
    \caption{Density of {\tt ATE} on {\tt NDCG}}
    \label{fig:ATE_NDCG}
\end{figure}
\subsection{Results}
\begin{table}[h]
    \centering
    \caption{Wald Test Results}\label{tab:wald_res}
    \begin{tabular}{lccc}
        \toprule
        \multicolumn{1}{c}{ } & \multicolumn{3}{c}{P-Value} \\
        \cmidrule(l{3pt}r{3pt}){2-4}
        Null Hypothesis &$M$: NDCG & $M$: MAP & $M$: MRR\\
        \midrule
        $H_0: \beta_3 = 0$ & 0.00119 & 0.02480 & 0.00346\\
        $H_0: \beta_2 = \beta_3 = 0$ & 0.00018 & 0.00003 & 0.00000\\
        $H_0: \beta_1=\beta_2=\beta_3= 0$ & 0.00018 & 0.00003 & 0.00000\\
        \bottomrule
        \end{tabular}
\end{table}
To decide the polynomial terms in the model, we perform Wald tests. The result from Wald tests is in Table~\ref{tab:wald_res}. Since all the null hypothesis are rejected, the results suggest us including {\tt ATE} on $M$ , {\tt ATE} on $M^2$ and {\tt ATE} on $M^3$ in the model of each $M$ ({\tt NDCG}, {\tt MAP}, {\tt MRR}).
\begin{figure*}[h]
    \includegraphics[width = 0.65\textwidth]{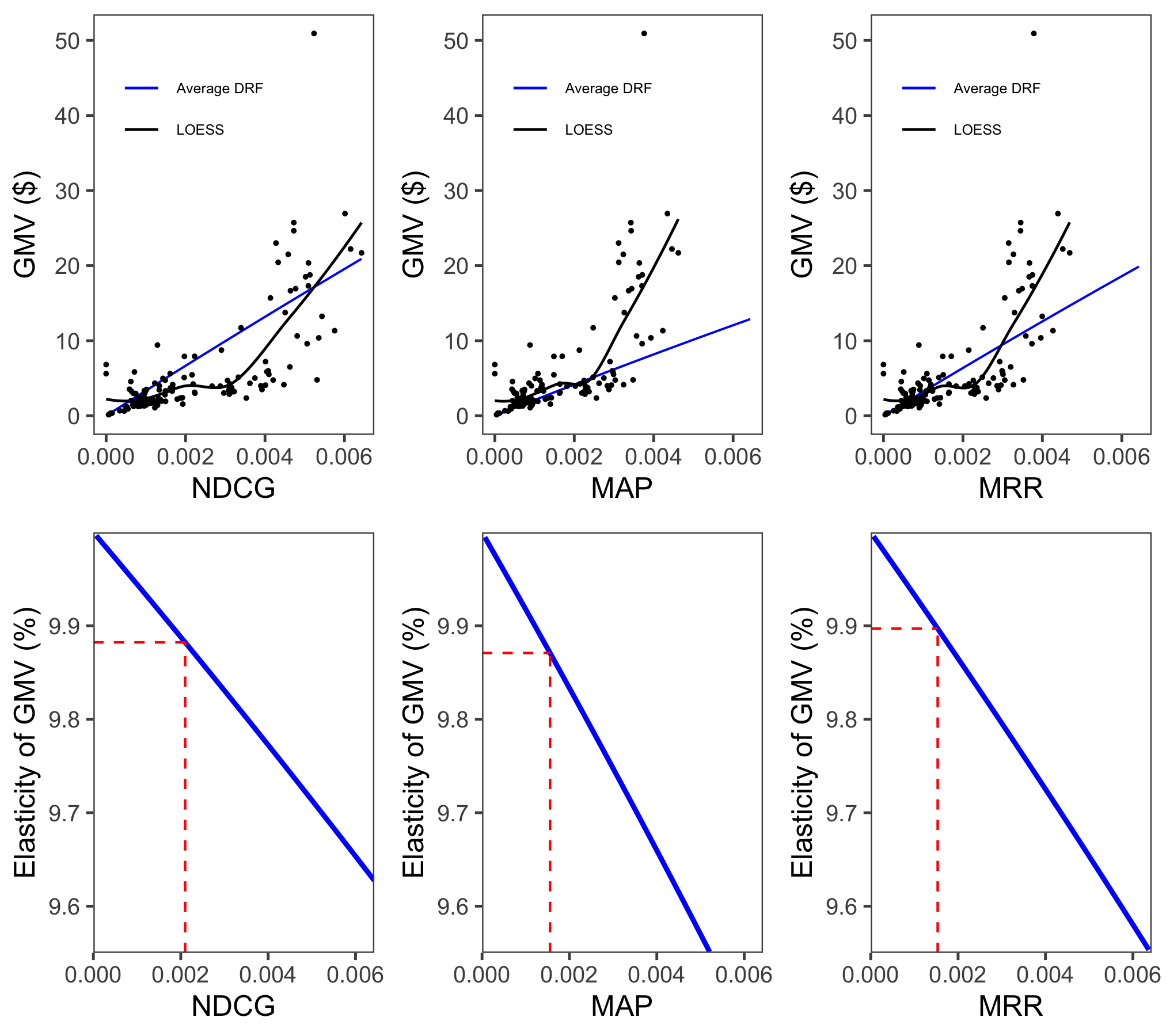}
    \caption{Estimated Mediator Dose Response Function}
    \label{fig:GMS_res_6plot}
\end{figure*}

Figure~\ref{fig:GMS_res_6plot} shows the estimation results.  On the first row, blue lines depict estimated mediator {\tt DRF}s.  Scattered points represent summary statistics (the data for {\tt CMMA}), {\tt ATE} on ranking evaluation metrics and {\tt ATE} on {\tt GMV}, of all experiments.  Black curves show results from fitting the data by LOESS.  Note that, the range of three evaluation metrics could be much smaller than those in IR literature since they are defined at the user level (see Section~\ref{section: user-level-evaluation-metrics}).  The estimated coefficients of mediator {\tt DRF} are in Table~\ref{tab: appendix_reg_res} in the Appendix.  The estimated mediator {\tt DRF}s show that all three online metrics have positive causal effects on {\tt GMV}.  Note that, the causal relationships are different from the pattern of the data (scatter points and black curves).  The differences between blue lines and black curves show the bias from fitting the data by machine learning methods without addressing omitted variables.

The second row of Figure~\ref{fig:GMS_res_6plot} shows the \textit{elasticity} of {\tt GMV}: the percentage change of {\tt GMV} for a 10\% increase in each evaluation metric at its different values, which are derived from estimated mediator {\tt DRF}.  The downward slopes imply that, for all three evaluation metrics, as they increase, the benefit of continuous improving them on {\tt GMV} decreases. For example, when average {\tt NDCG} per user equals 0.0021, its 10\% increase leads to a 9.88\% increase in average {\tt GMV} per user. Yet, when it equals 0.006, its 10\% increase only leads to a 9.65\% increase in average {\tt GMV} per user.

Now it is easy for product owners to pick the evaluation metric that could guide algorithm development to achieve the most significant lift in online {\tt GMV}.  Suppose the current average values (per user) of {\tt NDCG}, {\tt MAP}, and {\tt MRR} from live data are 0.00210, 0.00156, and 0.00153 respectively.  From estimated mediator {\tt DRF}s, we can calculate their corresponding elasticities of {\tt GMV}: 9.88\%, 9.87\%, and 9.90\%, which are marked by red lines in Figure~\ref{fig:GMS_res_6plot}.  Because online {\tt MRR} has the highest elasticity of {\tt GMV}, we should choose offline {\tt MRR}, which is estimated based on offline A/B test literature~\citep{Gilotte2018OfflineSystems} and thus has the same move as online {\tt MRR}, to guide the development of ranking algorithms.



\section{Conclusion}
In the internet industry, the algorithms developed offline power online products and online products contribute to the success of a business.  In many cases, offline evaluation metrics, which guide algorithm development, are different from online business KPIs.  It is important for product owners to pick the offline evaluation metric guided by which the algorithm could maximize online business KPIs.  By noticing that online products could be assessed by online counterparts of offline evaluation metrics, we decompose the problem into two parts.  Since the offline A/B test literature works out the first part: counterfactual estimators of offline evaluation metrics that move the same way as their online counterparts, we focus on the second part: inferring causal effects of online evaluation metrics on business KPIs.  The offline evaluation metric whose online counterpart causes the most significant lift in online business KPIs should be the north star.  We model online evaluation metrics as mediators and formalize the problem as to identify, estimate, and test mediator {\tt DRF}.  Our novel approach {\tt CMMA} combines mediation analysis and meta-analysis and has many advantages over the two strands of literature.  In particular, it takes as inputs only summary statistics from multiple past A/B tests, and thus it is easy to implement in scale.  We apply the approach on Etsy's real data to uncover the causality between three most popular rank-aware online evaluation metrics and {\tt GMV}, and show how we successfully identify {\tt MRR} as the offline evaluation metric for {\tt GMV} maximization.


%
\bibliographystyle{ACM-Reference-Format}
\balance 
\bibliography{references}


\begin{thebibliography}{29}


\ifx \showCODEN    \undefined \def \showCODEN     #1{\unskip}     \fi
\ifx \showDOI      \undefined \def \showDOI       #1{#1}\fi
\ifx \showISBNx    \undefined \def \showISBNx     #1{\unskip}     \fi
\ifx \showISBNxiii \undefined \def \showISBNxiii  #1{\unskip}     \fi
\ifx \showISSN     \undefined \def \showISSN      #1{\unskip}     \fi
\ifx \showLCCN     \undefined \def \showLCCN      #1{\unskip}     \fi
\ifx \shownote     \undefined \def \shownote      #1{#1}          \fi
\ifx \showarticletitle \undefined \def \showarticletitle #1{#1}   \fi
\ifx \showURL      \undefined \def \showURL       {\relax}        \fi
\providecommand\bibfield[2]{#2}
\providecommand\bibinfo[2]{#2}
\providecommand\natexlab[1]{#1}
\providecommand\showeprint[2][]{arXiv:#2}

\bibitem[\protect\citeauthoryear{Angrist, Imbens, and Rubin}{Angrist
  et~al\mbox{.}}{1996}]%
        {Angrist1996IdentificationVariables}
\bibfield{author}{\bibinfo{person}{Joshua Angrist}, \bibinfo{person}{Guido
  Imbens}, {and} \bibinfo{person}{Donald Rubin}.}
  \bibinfo{year}{1996}\natexlab{}.
\newblock \showarticletitle{{Identification of Causal Effects Using
  Instrumental Variables}}.
\newblock \bibinfo{journal}{\emph{J. Amer. Statist. Assoc.}}
  \bibinfo{volume}{91}, \bibinfo{number}{434} (\bibinfo{date}{6}
  \bibinfo{year}{1996}), \bibinfo{pages}{444}.
\newblock


\bibitem[\protect\citeauthoryear{Angrist and Krueger}{Angrist and
  Krueger}{2001}]%
        {Angrist2001InstrumentalExperiments}
\bibfield{author}{\bibinfo{person}{Joshua Angrist} {and} \bibinfo{person}{Alan
  Krueger}.} \bibinfo{year}{2001}\natexlab{}.
\newblock \showarticletitle{{Instrumental Variables and the Search for
  Identification: From Supply and Demand to Natural Experiments}}.
\newblock \bibinfo{journal}{\emph{Journal of Economic Perspectives}}
  \bibinfo{volume}{15}, \bibinfo{number}{4} (\bibinfo{date}{11}
  \bibinfo{year}{2001}), \bibinfo{pages}{69--85}.
\newblock


\bibitem[\protect\citeauthoryear{Baron and Kenny}{Baron and Kenny}{1986}]%
        {Baron1986}
\bibfield{author}{\bibinfo{person}{Reuben Baron} {and} \bibinfo{person}{David
  Kenny}.} \bibinfo{year}{1986}\natexlab{}.
\newblock \showarticletitle{{The moderator-mediator variable distinction in
  social psychological research: Conceptual, strategic, and statistical
  considerations.}}
\newblock \bibinfo{journal}{\emph{Journal of Personality and Social
  Psychology}} \bibinfo{volume}{51}, \bibinfo{number}{6}
  (\bibinfo{year}{1986}), \bibinfo{pages}{1173--1182}.
\newblock
\showISBNx{1939-1315(Electronic),0022-3514(Print)}


\bibitem[\protect\citeauthoryear{Browne and Jones}{Browne and Jones}{2017}]%
        {Browne2017}
\bibfield{author}{\bibinfo{person}{Will Browne} {and} \bibinfo{person}{Mike
  Jones}.} \bibinfo{year}{2017}\natexlab{}.
\newblock \showarticletitle{{What works in e-commerce - a meta-analysis of 6700
  online experiments}}.
\newblock \bibinfo{journal}{\emph{Qubit Digital Ltd}} (\bibinfo{year}{2017}),
  \bibinfo{pages}{1--21}.
\newblock


\bibitem[\protect\citeauthoryear{Cooper, Hedges, and Valentine}{Cooper
  et~al\mbox{.}}{2009}]%
        {Cooper2009TheMeta-analysis}
\bibfield{author}{\bibinfo{person}{Harris Cooper}, \bibinfo{person}{Larry
  Hedges}, {and} \bibinfo{person}{Jeffrey Valentine}.}
  \bibinfo{year}{2009}\natexlab{}.
\newblock \bibinfo{booktitle}{\emph{{The handbook of research synthesis and
  meta-analysis}}}.
\newblock \bibinfo{publisher}{Russell Sage Foundation}.
\newblock
\showISBNx{1610441389}


\bibitem[\protect\citeauthoryear{Gilotte, Calauz{\`{e}}nes, Nedelec, Abraham,
  and Doll{\'{e}}}{Gilotte et~al\mbox{.}}{2018}]%
        {Gilotte2018OfflineSystems}
\bibfield{author}{\bibinfo{person}{Alexandre Gilotte},
  \bibinfo{person}{Clément Calauz{\`{e}}nes}, \bibinfo{person}{Thomas
  Nedelec}, \bibinfo{person}{Alexandre Abraham}, {and} \bibinfo{person}{Simon
  Doll{\'{e}}}.} \bibinfo{year}{2018}\natexlab{}.
\newblock \showarticletitle{{Offline A/B Testing for Recommender Systems}}. In
  \bibinfo{booktitle}{\emph{Proceedings of the Eleventh ACM International
  Conference on Web Search and Data Mining}} \emph{(\bibinfo{series}{WSDM
  ’18})}. \bibinfo{publisher}{Association for Computing Machinery},
  \bibinfo{address}{New York, NY, USA}, \bibinfo{pages}{198--206}.
\newblock
\showISBNx{9781450355810}


\bibitem[\protect\citeauthoryear{Green, Ha, and Bullock}{Green
  et~al\mbox{.}}{2010}]%
        {green2010}
\bibfield{author}{\bibinfo{person}{Donald Green}, \bibinfo{person}{Shang Ha},
  {and} \bibinfo{person}{John Bullock}.} \bibinfo{year}{2010}\natexlab{}.
\newblock \showarticletitle{{Enough already about "Black Box" experiments:
  Studying mediation is more difficult than most scholars suppose}}.
\newblock \bibinfo{journal}{\emph{Annals of the American Academy of Political
  and Social Science}} \bibinfo{volume}{628}, \bibinfo{number}{1}
  (\bibinfo{year}{2010}), \bibinfo{pages}{200--208}.
\newblock
\showISSN{00027162}


\bibitem[\protect\citeauthoryear{Greene}{Greene}{2011}]%
        {Greene2011EconometricAnalysis}
\bibfield{author}{\bibinfo{person}{William Greene}.}
  \bibinfo{year}{2011}\natexlab{}.
\newblock \bibinfo{booktitle}{\emph{{Econometric analysis}}
  (\bibinfo{edition}{7} ed.)}.
\newblock \bibinfo{publisher}{Pearson Education Inc.} 1232 pages.
\newblock
\showISBNx{978-0131395381}


\bibitem[\protect\citeauthoryear{Heckman and Pinto}{Heckman and Pinto}{2015}]%
        {Heckman2015EconometricInputs}
\bibfield{author}{\bibinfo{person}{James Heckman} {and}
  \bibinfo{person}{Rodrigo Pinto}.} \bibinfo{year}{2015}\natexlab{}.
\newblock \showarticletitle{{Econometric Mediation Analyses: Identifying the
  Sources of Treatment Effects from Experimentally Estimated Production
  Technologies with Unmeasured and Mismeasured Inputs}}.
\newblock \bibinfo{journal}{\emph{Econometric Reviews}}  \bibinfo{volume}{34}
  (\bibinfo{year}{2015}), \bibinfo{pages}{6--31}.
\newblock


\bibitem[\protect\citeauthoryear{Higgins and Thompson}{Higgins and
  Thompson}{2002}]%
        {Higgins2002QuantifyingMeta-analysis}
\bibfield{author}{\bibinfo{person}{Julian Higgins} {and} \bibinfo{person}{Simon
  Thompson}.} \bibinfo{year}{2002}\natexlab{}.
\newblock \showarticletitle{{Quantifying heterogeneity in a meta-analysis}}.
\newblock \bibinfo{journal}{\emph{Statistics in Medicine}}
  \bibinfo{volume}{21}, \bibinfo{number}{11} (\bibinfo{date}{6}
  \bibinfo{year}{2002}), \bibinfo{pages}{1539--1558}.
\newblock
\showISSN{0277-6715}


\bibitem[\protect\citeauthoryear{Imai, Keele, Tingley, and Yamamoto}{Imai
  et~al\mbox{.}}{2014}]%
        {Imai2014CommentAnalysis.}
\bibfield{author}{\bibinfo{person}{Kosuke Imai}, \bibinfo{person}{Luke Keele},
  \bibinfo{person}{Dustin Tingley}, {and} \bibinfo{person}{Teppei Yamamoto}.}
  \bibinfo{year}{2014}\natexlab{}.
\newblock \showarticletitle{{Comment on Pearl: Practical implications of
  theoretical results for causal mediation analysis.}}
\newblock \bibinfo{journal}{\emph{Psychological Methods}} \bibinfo{volume}{19},
  \bibinfo{number}{4} (\bibinfo{year}{2014}), \bibinfo{pages}{482--487}.
\newblock


\bibitem[\protect\citeauthoryear{Imai, Keele, and Yamamoto}{Imai
  et~al\mbox{.}}{2010}]%
        {Imai2010}
\bibfield{author}{\bibinfo{person}{Kosuke Imai}, \bibinfo{person}{Luke Keele},
  {and} \bibinfo{person}{Teppei Yamamoto}.} \bibinfo{year}{2010}\natexlab{}.
\newblock \showarticletitle{{Identification, Inference and Sensitivity Analysis
  for Causal Mediation Effects}}.
\newblock \bibinfo{journal}{\emph{Statist. Sci.}} (\bibinfo{year}{2010}).
\newblock
\showISBNx{0883-4237}
\showISSN{0883-4237}


\bibitem[\protect\citeauthoryear{Imbens}{Imbens}{2000}]%
        {Imbens2000TheFunctions}
\bibfield{author}{\bibinfo{person}{Guido Imbens}.}
  \bibinfo{year}{2000}\natexlab{}.
\newblock \showarticletitle{{The Role of the Propensity Score in Estimating
  Dose-Response Functions}}.
\newblock \bibinfo{journal}{\emph{Biometrika}} \bibinfo{volume}{87},
  \bibinfo{number}{3} (\bibinfo{year}{2000}), \bibinfo{pages}{706--710}.
\newblock
\showISSN{00063444}


\bibitem[\protect\citeauthoryear{Imbens and Hirano}{Imbens and Hirano}{2004}]%
        {Imbens2004TheTreatments}
\bibfield{author}{\bibinfo{person}{Guido Imbens} {and} \bibinfo{person}{Keisuke
  Hirano}.} \bibinfo{year}{2004}\natexlab{}.
\newblock \showarticletitle{{The Propensity Score with Continuous Treatments}}.
\newblock  (\bibinfo{year}{2004}).
\newblock


\bibitem[\protect\citeauthoryear{MacKinnon, Fairchild, and Fritz}{MacKinnon
  et~al\mbox{.}}{2006}]%
        {MacKinnon2006MediationAnalysis}
\bibfield{author}{\bibinfo{person}{David MacKinnon}, \bibinfo{person}{Amanda
  Fairchild}, {and} \bibinfo{person}{Matthew Fritz}.}
  \bibinfo{year}{2006}\natexlab{}.
\newblock \showarticletitle{{Mediation Analysis}}.
\newblock \bibinfo{journal}{\emph{Annual Review of Psychology}}
  \bibinfo{volume}{58}, \bibinfo{number}{1} (\bibinfo{date}{12}
  \bibinfo{year}{2006}), \bibinfo{pages}{593--614}.
\newblock
\showISSN{0066-4308}


\bibitem[\protect\citeauthoryear{Pearl}{Pearl}{2001}]%
        {Pearl2001DirectEffects}
\bibfield{author}{\bibinfo{person}{Judea Pearl}.}
  \bibinfo{year}{2001}\natexlab{}.
\newblock \showarticletitle{{Direct and indirect effects}}. In
  \bibinfo{booktitle}{\emph{Proceedings of the seventeenth conference on
  uncertainty in artificial intelligence}}. \bibinfo{publisher}{Morgan Kaufmann
  Publishers Inc.}, \bibinfo{pages}{411--420}.
\newblock
\showISBNx{1558608001}


\bibitem[\protect\citeauthoryear{Pearl}{Pearl}{2014a}]%
        {Pearl2014}
\bibfield{author}{\bibinfo{person}{Judea Pearl}.}
  \bibinfo{year}{2014}\natexlab{a}.
\newblock \showarticletitle{{Interpretation and identification of causal
  mediation.}}
\newblock \bibinfo{journal}{\emph{Psychological Methods}} \bibinfo{volume}{19},
  \bibinfo{number}{4} (\bibinfo{year}{2014}), \bibinfo{pages}{459--481}.
\newblock


\bibitem[\protect\citeauthoryear{Pearl}{Pearl}{2014b}]%
        {Pearl2014ReplyAnalysis}
\bibfield{author}{\bibinfo{person}{Judea Pearl}.}
  \bibinfo{year}{2014}\natexlab{b}.
\newblock \showarticletitle{{Reply to Commentary by Imai, Keele, Tingley, and
  Yamamoto Concerning Causal Mediation Analysis}}.
\newblock \bibinfo{journal}{\emph{Psychological Methods}} \bibinfo{volume}{19},
  \bibinfo{number}{4} (\bibinfo{year}{2014}), \bibinfo{pages}{488--492}.
\newblock


\bibitem[\protect\citeauthoryear{Peysakhovich and Eckles}{Peysakhovich and
  Eckles}{2018}]%
        {Peysakhovich}
\bibfield{author}{\bibinfo{person}{Alexander Peysakhovich} {and}
  \bibinfo{person}{Dean Eckles}.} \bibinfo{year}{2018}\natexlab{}.
\newblock \showarticletitle{{Learning causal effects from many randomized
  experiments using regularized instrumental variables}}. In
  \bibinfo{booktitle}{\emph{The Web Conference 2018 (WWW 2018)}}.
  \bibinfo{publisher}{ACM}, \bibinfo{address}{New York, NY}.
\newblock
\showISBNx{9781450356398}


\bibitem[\protect\citeauthoryear{Robins}{Robins}{2003}]%
        {Robins2003}
\bibfield{author}{\bibinfo{person}{James Robins}.}
  \bibinfo{year}{2003}\natexlab{}.
\newblock \showarticletitle{{Semantics of causal DAG models and the
  identification of direct and indirect effects}}.
\newblock \bibinfo{journal}{\emph{Highly Structured Stochastic Systems}}
  (\bibinfo{date}{1} \bibinfo{year}{2003}), \bibinfo{pages}{70--82}.
\newblock


\bibitem[\protect\citeauthoryear{Robins and Greenland}{Robins and
  Greenland}{1992}]%
        {Robins1992IdentifiabilityEffects}
\bibfield{author}{\bibinfo{person}{James Robins} {and} \bibinfo{person}{Sander
  Greenland}.} \bibinfo{year}{1992}\natexlab{}.
\newblock \showarticletitle{{Identifiability and exchangeability for direct and
  indirect effects}}.
\newblock \bibinfo{journal}{\emph{Epidemiology}} \bibinfo{volume}{3},
  \bibinfo{number}{2} (\bibinfo{year}{1992}), \bibinfo{pages}{143--155}.
\newblock


\bibitem[\protect\citeauthoryear{Robins and Richardson}{Robins and
  Richardson}{2010}]%
        {Robins2010AlternativeEffects}
\bibfield{author}{\bibinfo{person}{James Robins} {and} \bibinfo{person}{Thomas
  Richardson}.} \bibinfo{year}{2010}\natexlab{}.
\newblock \showarticletitle{{Alternative graphical causal models and the
  identification of direct effects}}.
\newblock \bibinfo{journal}{\emph{Causality and psychopathology: finding the
  determinants of disorders and their cures}} (\bibinfo{year}{2010}).
\newblock


\bibitem[\protect\citeauthoryear{Rubin}{Rubin}{2003}]%
        {Rubin2003BasicStudies}
\bibfield{author}{\bibinfo{person}{Donald Rubin}.}
  \bibinfo{year}{2003}\natexlab{}.
\newblock \bibinfo{title}{{Basic concepts of statistical inference for causal
  effects in experiments and observational studies}}.  (\bibinfo{year}{2003}).
\newblock


\bibitem[\protect\citeauthoryear{Rucker, Preacher, Tormala, and Petty}{Rucker
  et~al\mbox{.}}{2011}]%
        {Rucker2011MediationRecommendations}
\bibfield{author}{\bibinfo{person}{Derek Rucker}, \bibinfo{person}{Kristopher
  Preacher}, \bibinfo{person}{Zakary Tormala}, {and} \bibinfo{person}{Richard
  Petty}.} \bibinfo{year}{2011}\natexlab{}.
\newblock \showarticletitle{{Mediation Analysis in Social Psychology: Current
  Practices and New Recommendations}}.
\newblock \bibinfo{journal}{\emph{Social and Personality Psychology Compass}}
  \bibinfo{volume}{5}, \bibinfo{number}{6} (\bibinfo{year}{2011}),
  \bibinfo{pages}{359--371}.
\newblock
\showISSN{17519004}


\bibitem[\protect\citeauthoryear{Small}{Small}{2012}]%
        {Small2012MediationVariables}
\bibfield{author}{\bibinfo{person}{Dylan Small}.}
  \bibinfo{year}{2012}\natexlab{}.
\newblock \showarticletitle{{Mediation analysis without sequential
  ignorability: Using baseline covariates interacted with random assignment as
  instrumental variables}}.
\newblock \bibinfo{journal}{\emph{Journal of Statistical Research}}
  \bibinfo{volume}{46}, \bibinfo{number}{2} (\bibinfo{year}{2012}),
  \bibinfo{pages}{91--103}.
\newblock


\bibitem[\protect\citeauthoryear{Sobel}{Sobel}{2008}]%
        {Sobel2008IdentificationVariables}
\bibfield{author}{\bibinfo{person}{Michael Sobel}.}
  \bibinfo{year}{2008}\natexlab{}.
\newblock \showarticletitle{{Identification of Causal Parameters in Randomized
  Studies With Mediating Variables}}.
\newblock \bibinfo{journal}{\emph{Journal of Educational and Behavioral
  Statistics}} \bibinfo{volume}{33}, \bibinfo{number}{2}
  (\bibinfo{year}{2008}), \bibinfo{pages}{230--251}.
\newblock


\bibitem[\protect\citeauthoryear{Stanley and Doucouliagos}{Stanley and
  Doucouliagos}{2012}]%
        {Stanley2012Meta-regressionBusiness}
\bibfield{author}{\bibinfo{person}{Tom Stanley} {and} \bibinfo{person}{Hristos
  Doucouliagos}.} \bibinfo{year}{2012}\natexlab{}.
\newblock \bibinfo{booktitle}{\emph{{Meta-regression analysis in economics and
  business}}}.
\newblock \bibinfo{publisher}{Routledge}.
\newblock
\showISBNx{0203111710}


\bibitem[\protect\citeauthoryear{Wooldridge}{Wooldridge}{2010}]%
        {Wooldridge2010EconometricData}
\bibfield{author}{\bibinfo{person}{Jeffrey Wooldridge}.}
  \bibinfo{year}{2010}\natexlab{}.
\newblock \bibinfo{booktitle}{\emph{{Econometric analysis of cross section and
  panel data}}}.
\newblock \bibinfo{publisher}{MIT Press}, \bibinfo{address}{Cambridge, MA}.
  1096 pages.
\newblock


\bibitem[\protect\citeauthoryear{Yin and Hong}{Yin and Hong}{2019}]%
        {Yin2019TheAnalysis}
\bibfield{author}{\bibinfo{person}{Xuan Yin} {and} \bibinfo{person}{Liangjie
  Hong}.} \bibinfo{year}{2019}\natexlab{}.
\newblock \showarticletitle{{The Identification and Estimation of Direct and
  Indirect Effects in A/B Tests Through Causal Mediation Analysis}}. In
  \bibinfo{booktitle}{\emph{Proceedings of the 25th ACM SIGKDD International
  Conference on Knowledge Discovery {\&} Data Mining}}
  \emph{(\bibinfo{series}{KDD '19})}. \bibinfo{publisher}{ACM},
  \bibinfo{address}{New York, NY, USA}, \bibinfo{pages}{2989--2999}.
\newblock
\showISBNx{978-1-4503-6201-6}


\end{thebibliography}
\clearpage
\appendix
\setcounter{table}{0}
\renewcommand{\thetable}{\Alph{section}\arabic{table}}

\section{Proof of Theorem~\ref{thm:identification}}\label{appendix:proof-thm}
\begin{proof}
    Let $\matr{X_i}= \begin{bmatrix}
        (\vect{S_i}T_{i})^{\top}\vect{\tau}  & \vect{S_i}^{\top} & (\vect{S_i}T_{i})^{\top}\vect{H}
    \end{bmatrix}$, and \\ $\vect{B}= \begin{bmatrix}
        \beta  & \vect{\theta} +  \vect{\phi}\beta & \vect{\pi}
    \end{bmatrix}^{\top}$, so Equation \ref{eq: iv_obs_y1} can be written as  
$$ Y_i = \matr{X_i}\vect{B} + \epsilon_i'$$

Because of assumption \ref{assump:first-stage}, $\mathbb{E}[\matr{X_i}^{\top}\matr{X_i}]$ is guaranteed to have full rank, thus is invertible. If we are able to show that $\mathbb{E}[\matr{X_i}^{\top}\epsilon_i'] = \vect{0}$, then $\vect{B}$ can be estimated by linear projection $\vect{B}^{*} = \mathbb{E}[\matr{X_i}^{\top}\matr{X_i}]^{-1}\mathbb{E}[\matr{X_i}^{\top}Y_i]$. While this is infeasible, we can first estimate $\vect{\tau}$ with its estimate $\widehat{\vect{\tau}}$ and use $\matr{\widehat{X_i}}= \begin{bmatrix}
    (\vect{S_i}T_{i})^{\top}\widehat{\vect{\tau}}  & \vect{S_i}^{\top} & (\vect{S_i}T_{i})^{\top}\vect{H}
\end{bmatrix}$ in place of $\vect{X_i}$. Let $\mathbf{\widehat{X}}$ be N-component data vector with ith element $\matr{\widehat{X_i}}$. The resulted estimator is the two-stage least squares(2SLS) estimator, 
$$\widehat{\vect{B}}_{2SLS} = (\mathbf{\widehat{X}}^{\top}\mathbf{\widehat{X}})^{-1}\mathbf{\widehat{X}}^{\top}\mathbf{Y}$$ to estimate $\vect{B}$.

Note that, because of Assumption \ref{assump:random-ts}, \ref{assump:relaxed-SI} and \ref{assump:trial_independent}, the following equations are true.
\begin{align*}
    &\mathbb{E}[\eta_i] = \mathbb{E}[\vect{S_i}T_{i}]^{\top}\mathbb{E}[(\vect{\tau_i}-\vect{\tau)}] + \mathbb{E}[M_{i}^{*}] = 0\\
    &\mathbb{E}[M_{i}(\beta_i-\beta)] =  \mathbb{E}[\mathbb{E}[M_{i}(\beta_i-\beta)|\vect{S}_i,T_i]]= \mathbb{E}[M_{i}\mathbb{E}[(\beta_i-\beta)|\vect{S}_i,T_i]] = 0\\
    &\mathbb{E}[\vect{\gamma}_i-\vect{H}\vect{\pi}]=\mathbb{E}[\mathbb{E}[\vect{\gamma}_i|\vect{H}]-\vect{H}\vect{\pi}]= \vect{0} 
\end{align*}
With the first component in $\vect{X_i}$, 
    \begin{align*}
        &\mathbb{E}\left[(\vect{S_i}T_{i})^{\top}\vect{\tau}\epsilon_i'\right]\\
        =& \mathbb{E}\left[(\vect{S_i}T_{i})^{\top}\vect{\tau}\left[(\vect{S}_iT_{i})^{\top}(\vect{\gamma}_i-\vect{H}\vect{\pi})+ \eta_i\beta + M_{i}(\beta_i-\beta)+Y_i^{*}\right]\right]\\
        =& \mathbb{E}\left[(\vect{S_i}T_{i})^{\top}\vect{\tau}(\vect{S}_iT_{i})^{\top}(\vect{\gamma}_i-\vect{H}\vect{\pi})\right]\\
        &\;+\mathbb{E}\left[(\vect{S_i}T_{i})^{\top}\vect{\tau}\right]\mathbb{E}\left[\eta_i\beta + M_{i}(\beta_i-\beta)+Y_i^{*}\right]\\
        =&\mathbb{E}\left[(\vect{S_i}T_{i})^{\top}\vect{\tau}(\vect{S}_iT_{i})^{\top}(\vect{\gamma}_i-\vect{H}\vect{\pi})\right] \\
        =&\mathbb{E}\left[\sum_{k=1}^{K}\mathbb{E}\left[S_{i,k}T_{i}\tau_{i,k}(\gamma_{i,k}-H_{k}\vect{\pi})\right]\right]\\
        =&\mathbb{E}\left[\sum_{k=1}^{K}\mathbb{E}\left[\mathbb{E}\left[S_{i,k}T_{i}\tau_{i,k}(\gamma_{i,k}-H_{k}\vect{\pi})|H_{k}\right]\right]\right]\\
        =&\mathbb{E}\left[\sum_{k=1}^{K}\mathbb{E}\left[S_{i,k}T_{i}\tau_{i,k}\left(\mathbb{E}\left[\gamma_{i,k}|H_{k}\right]-H_{k}\vect{\pi}\right)\right]\right]\\
        =& 0,
    \end{align*}
where the forth equality expands the matrix multiplication and the sixth equality follows from Assumption \ref{assump:trial_independent}.

For the second component,    
\begin{align*}
    &\mathbb{E}\left[\vect{S_i}\epsilon_i'\right]\\
    =& \mathbb{E}\left[\vect{S_i}(\vect{S}_iT_{i})^{\top}(\vect{\gamma}_i-\vect{H}\vect{\pi})\right] \\
    & \;+ \mathbb{E}\left[\vect{S_i}\right]\mathbb{E}\left[\eta_i\beta + M_{i}(\beta_i-\beta)+Y_i^{*}\right]\\
    =& \mathbb{E}\left[\vect{S_i}(\vect{S}_iT_{i})^{\top}\right]\mathbb{E}\left[\vect{\gamma}_i-\vect{H}\vect{\pi}\right]\\
    =& 0,
\end{align*}
where both the second and third equality follows from Assumption \ref{assump:random-ts}.

With respect to the third component,
\begin{align*}
    &\mathbb{E}\left[(\vect{S_i}T_{i})^{\top}\vect{H}\epsilon_i'\right]\\
    =& \mathbb{E}\left[(\vect{S_i}T_{i})^{\top}\vect{H}(\vect{S}_iT_{i})^{\top}(\vect{\gamma_i}-\vect{H}\vect{\pi})\right] \\
    & \; +\mathbb{E}\left[(\vect{S_i}T_{i})^{\top}\vect{H}\right]\mathbb{E}\left[\eta_i\beta + M_{i}(\beta_i-\beta)+Y_i^{*}\right]\\
    =& \mathbb{E}\left[(\vect{S_i}T_{i})^{\top}\vect{H}(\vect{S}_iT_{i})^{\top}\mathbb{E}\left[(\vect{\gamma_i}-\vect{H}\vect{\pi})|\vect{H}\right]\right]\\
    = & 0,
\end{align*}
where both the second and third equality follow from Assumption \ref{assump:random-ts}.

Taken together, we have shown that $\mathbb{E}[\matr{X_i}^{\top}\epsilon_i'] = \vect{0}$, therefore $\vect{B}$ can be estimated by $$\widehat{\vect{B}}_{2SLS} = (\mathbf{\widehat{X}}^{\top}\mathbf{\widehat{X}})^{-1}\mathbf{\widehat{X}}^{\top}\mathbf{Y}.$$
\end{proof}

\section{Equivalence of {\tt CMMA} and weight-adjusted 2SLS estimator}\label{appendix:proof-iv-equivalent}

Let $\mathbf{Z}$ be an N-component data vector with ith element $(\vect{S_i}T_i)^{\top}$, $\mathbf{S}$ be an N-component data vector with ith element $\vect{S_i}^{\top}$, and $\mathbf{P_{S}} = \mathbf{S}(\mathbf{S}^{\top}\mathbf{S})^{-1}\mathbf{S}^{\top}$, $\mathbf{M_{S}} = \mathbf{I} - \mathbf{P_{S}}$. The following proposition shows the link between {\tt CMMA} and 2SLS estimator. 
\begin{proposition}\label{proposition:iv-equivalent}
    The {\tt CMMA} method defined in Algorithm~\ref{algorithm: CMMA} is equivalent to a weight-adjusted 2SLS IV estimator with weight $\mathbf{W}^{\top}\mathbf{W}$, where $\mathbf{W}=(\mathbf{Z}^{\top}\mathbf{M_{S}}\mathbf{Z})^{-1}\mathbf{Z}^{\top}$
\end{proposition}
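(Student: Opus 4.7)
The plan is to write both estimators in closed matrix form via two applications of the Frisch--Waugh--Lovell (FWL) theorem and match them term by term. The bridge between CMMA's trial-level OLS and the individual-level 2SLS is that each trial-level {\tt ATE} is itself a linear functional of the individual-level data through the operator $\mathbf{W}\mathbf{M_S}$.

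First, let $\vect{a}^M,\vect{a}^Y\in\mathbb{R}^K$ denote the vectors stacking the trial-level {\tt ATE}s produced in Steps~1--4 of Algorithm~\ref{algorithm: CMMA}. Applying FWL to the trial-level regression in Steps~5--6 gives
\[
\widehat{\beta}_{CMMA} = \frac{(\vect{a}^M)^{\top}\mathbf{M}_{\vect{H}}\,\vect{a}^Y}{(\vect{a}^M)^{\top}\mathbf{M}_{\vect{H}}\,\vect{a}^M},
\]
where $\mathbf{M}_{\vect{H}}=I_K-\vect{H}(\vect{H}^{\top}\vect{H})^{-1}\vect{H}^{\top}$. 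Next I would verify the identities
\[
\vect{a}^M = \mathbf{W}\mathbf{M_S}\mathbf{M},\qquad \vect{a}^Y = \mathbf{W}\mathbf{M_S}\mathbf{Y},
\]
by recognizing that each trial's difference-in-means equals the FWL coefficient on $\vect{S}_i T_i$ in the individual-level regression of $M_i$ (resp.\ $Y_i$) on $\vect{S}_i T_i$ and $\vect{S}_i$. The routine calculation shows $\mathbf{Z}^{\top}\mathbf{M_S}\mathbf{Z}$ is diagonal, and its entries cancel against $\mathbf{Z}^{\top}\mathbf{M_S}\mathbf{M}$ to leave the trial-level mean differences. Substituting yields
\[
\widehat{\beta}_{CMMA} = \frac{\mathbf{M}^{\top}\mathbf{M_S}\mathbf{W}^{\top}\mathbf{M}_{\vect{H}}\mathbf{W}\mathbf{M_S}\mathbf{Y}}{\mathbf{M}^{\top}\mathbf{M_S}\mathbf{W}^{\top}\mathbf{M}_{\vect{H}}\mathbf{W}\mathbf{M_S}\mathbf{M}}.
\]

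Second, I would write the weight-adjusted 2SLS as the weighted IV estimator of $\beta$ in Equation~\ref{eq: iv_obs_y2}, instrumenting $\mathbf{M}$ with $\mathbf{Z}$ and including exogenous controls $\mathbf{S}$ and $\mathbf{Z}\vect{H}$, using weight $\mathbf{W}^{\top}\mathbf{W}$ on individual observations. Applying FWL under this weighted inner product to partial out the controls, the two operator identities
\[
\mathbf{W}\mathbf{M_S}\mathbf{S}=\mathbf{0},\qquad \mathbf{W}\mathbf{M_S}\mathbf{Z}\vect{H}=\vect{H}
\]
do the work: the first because $\mathbf{M_S}$ annihilates $\mathbf{S}$, and the second because $\mathbf{W}\mathbf{M_S}\mathbf{Z}=I_K$ by the very definition of $\mathbf{W}$. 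Together they imply that individual-level residualization against $\mathbf{S}$ and $\mathbf{Z}\vect{H}$ followed by left-multiplication by $\mathbf{W}$ equals left-multiplication by $\mathbf{W}\mathbf{M_S}$ followed by trial-level residualization against $\vect{H}$, which collapses the weight-adjusted 2SLS to exactly the ratio above.

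The main obstacle is the bookkeeping in the second part: because $\mathbf{W}^{\top}\mathbf{W}$ has rank $K<N$, the induced weighted inner product on the $N$-dimensional individual-level space is degenerate, so I must verify that FWL still goes through and that partialing out the individual-level controls reduces, via the two operator identities, to partialing out $\vect{H}$ in the $K$-dimensional trial-level space. Once that reduction is in hand, matching the two closed-form ratios is immediate algebra.
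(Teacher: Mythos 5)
Your proposal is correct and follows essentially the same route as the paper's proof: both hinge on Frisch--Waugh--Lovell together with the identity $\mathbf{W}\mathbf{M_S}\mathbf{Z}=I_K$, and on recognizing that $\mathbf{W}\mathbf{M_S}\mathbf{Y}$ and $\mathbf{W}\mathbf{M_S}\mathbf{M}=\widehat{\vect{\tau}}$ are exactly the per-trial {\tt ATE} vectors, so that the weight-adjusted 2SLS collapses to the trial-level regression of Steps~5--6. The only cosmetic difference is that you apply a second FWL step at the trial level to partial out $\vect{H}$ and isolate $\beta$, whereas the paper keeps the joint $(\beta,\vect{\pi})$ normal equations; your explicit flag about the rank-$K$ degeneracy of $\mathbf{W}^{\top}\mathbf{W}$ is a point the paper handles implicitly by defining the weighted estimator on the $\mathbf{M_S}$-residualized data.
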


\begin{proof}
Let $\vect{\beta} = \begin{pmatrix} \beta \\ \vect{\pi} \end{pmatrix}$, $\mathbf{\overline{Z}} = \begin{bmatrix} \vect{\mathbf{Z}\widehat{\tau}} & \mathbf{Z}\vect{H} \end{bmatrix}$ . 
    Since we are not particularly interested in coefficients in front of $\vect{S_i}$, using Frisch–Waugh–Lovell theorem, we can get 2SLS estimator for $\vect{\beta}$,
    $$ \widehat{\vect{\beta}}_{2SLS} = \left(
         \mathbf{\overline{Z}}^{\top}\mathbf{M_{S}}\mathbf{M_{S}}        \mathbf{\overline{Z}}\right)^{-1} \mathbf{\overline{Z}}^{\top}\mathbf{M_{S}}\mathbf{M_{S}}\mathbf{Y}.
        $$ 
    We could use a weighting matrix $\mathbf{W}$ and still get a consistent estimator for $\vect{\beta}$,
    $$\tilde{\vect{\beta}} = (\mathbf{\overline{Z}}^{\top}\mathbf{M_{S}}\mathbf{W}^{\top}\mathbf{W}\mathbf{M_{S}}\mathbf{\overline{Z}})^{-1}\mathbf{\overline{Z}}^{\top}\mathbf{M_{S}}\mathbf{W}^{\top}\mathbf{W}\mathbf{M_{S}}\mathbf{Y}$$.

    Use $\mathbf{W} = (\mathbf{Z}^{\top}\mathbf{M_{S}}\mathbf{Z})^{-1}\mathbf{Z}^{\top}$

    \begin{align*}
        \mathbf{\overline{Z}}^{\top}\mathbf{M_{S}}\mathbf{W}^{\top} & = 
        \begin{pmatrix}
            \vect{\widehat{\tau}}^{\top}\mathbf{Z}^{\top}\mathbf{M_{S}}\mathbf{Z}(\mathbf{Z}^{\top}\mathbf{M_{S}}\mathbf{Z})^{-1} \\
            \vect{H}^{\top}\mathbf{Z}^{\top}\mathbf{M_{S}}\mathbf{Z}(\mathbf{Z}^{\top}\mathbf{M_{S}}\mathbf{Z})^{-1}
    \end{pmatrix}  
        = \begin{pmatrix}
            \vect{\widehat{\tau}}^{\top} \\ \vect{H}^{\top}
        \end{pmatrix} 
    \end{align*}
    $$\widehat{\vect{\beta}}_{{\tt CMMA}} = (\begin{pmatrix}
        \vect{\widehat{\tau}}^{\top} \\ \vect{H}^{\top}
        \end{pmatrix} 
     \begin{pmatrix}
            \vect{\widehat{\tau}} & \vect{H}
        \end{pmatrix})^{-1} \begin{pmatrix}
        \vect{\widehat{\tau}}^{\top} \\ \vect{H}^{\top}
    \end{pmatrix} \mathbf{W}\mathbf{M_{S}}\mathbf{Y} $$

$\mathbf{W}\mathbf{M_{S}}\mathbf{Y}= (\mathbf{Z}^{\top}\mathbf{M_{S}}\mathbf{Z})^{-1}\mathbf{Z}^{\top}\mathbf{M_{S}}\mathbf{Y}$ is equivalent to regressing $Y_i$ on $\vect{S_i}T_i$ and $S_i$ and take coefficients of $\vect{S_i}T_i$. And it is equivalent to regress on $T_i$ for each trial.
\end{proof}
\section{Simulation Setup}\label{appendix:simulation}
We follow the specification described in Equation \ref{eq: obs_m} and Equation \ref{eq: iv_obs_y1} and let the $M^{*}_i$ and $Y^{*}_i$ be jointly normally distributed:
\begin{align*}
\begin{pmatrix}
    M^{*}_i \\
    Y^{*}_i
\end{pmatrix}
\sim \mathcal{N}
\left(
\begin{pmatrix}
    {0} \\
    {0}
\end{pmatrix}
,
\begin{bmatrix}
    \sigma_{M^*}^2 &,& \rho\sigma_{M^*}\sigma_{Y^*}\\
    \rho\sigma_{M^*}\sigma_{Y^*} &,& \sigma_{Y^*}^2 
    \end{bmatrix}
\right)
\end{align*}
with $(\rho, \sigma_{M^*}, \sigma_{Y^*})=(0.95, 3, 3)$. We fix the number of trials to be 50, and used independent uniform distributions to specify parameter value for each element of $\vect{\theta}$, $\vect{\phi}$. To satisfy Assumption \ref{assump:trial_independent} and \ref{assump:first-stage}, $\vect{\tau}$ is set to be the sum of a $H$ dependent term and a random vector drawing from a uniform distribution. All the other parameter values are listed in the Table \ref{tab:param_value}. The innovations in the error terms such as $\beta_i-\beta$, elements of $\vect{\tau}_i-\vect{\tau}$, and 
$\vect{\gamma_{i}} - \vect{H}\vect{\pi}$ are all drawn from independent normal distributions $\mathcal{N}(0, 0.5)$.
  \begin{table}[H]
    \caption{Parameter Values}
    \label{tab:param_value}  
    \centering%
\begin{tabular}{c|c}
{\bf Parameters} & {\bf Value} \\
\midrule
$\beta$ & 4   \\
$\vect{\pi}$ & $(0, 1.5, 3)^{\top}$\\
$\vect{\theta}$ & $(\theta_1, \cdots, \theta_{50})^{\top}$,  $\theta_k \sim \mathcal{U}(-2, 2) \; \forall k$  \\
$\vect{\phi}$ & $(\phi_1, \cdots, \phi_{50})^{\top}$,  $\phi_k \sim \mathcal{U}(-2, 2) \; \forall k$  \\
$\vect{\tau}$ & \makecell{$\vect{\Tau}^{\top}\matr{H}^{\top}+(\tau_1, \cdots, \tau_{50})^{\top}$, \\ $\tau_k \sim \mathcal{U}(-3, 3) \;\forall k$ and $\vect{\Tau}$ = $(0.5, 1, 2.5)^{\top}$} \\
\bottomrule
\end{tabular}
\end{table}

The one-hot group assignment variable $\vect{S}_i$ and treatment indicator $T_i$ are randomly generated. We assume the 50 trials can be grouped into 3 experiment types and use experiment types as trial level covariates. Trials are randomly assigned into three types and $\matr{H}$ is a $3\times 50$ matrix representing such assignment. Under this setup, all the assumptions are satisfied, and thus we are ready to estimate $\beta$.

\section{Tables}\label{appendix:tables}
\begin{table}[!htbp] \centering 
    \caption{Second Stage Regression Results} 
    \label{tab: appendix_reg_res} 
  \begin{tabular}{@{\extracolsep{5pt}}lccc} 
    \toprule
   & \multicolumn{3}{c}{\textit{Dependent variable: GMS}} \\ 
  \cline{2-4} 
  \\[-1.8ex] & $M$: NDCG & $M$: MPP & $M$: MRR \\ 
  \\[-1.8ex] & (1) & (2) & (3)\\ 
  \hline \\[-1.8ex] 
  ${\tt ATE}^{M}$ & 3,369.9$^{*}$ & 2,113.3 &3,227.6  \\ 
  & (1,753.3) & (2,600.2) & (2,368.2) \\ 
 ${\tt ATE}^{M^2}$ & $-$18,593.8$^{***}$ & $-$17,254.4$^{*}$ &$-$21,411.1$^{**}$ \\ 
  & (6,606.5) & (9,718.6)  &(8,882.6)  \\ 
 ${\tt ATE}^{M^3}$ & 16,733.0$^{***}$ & 16,191.1$^{**}$ &19,229.7$^{***}$   \\ 
  & (5,162.6) & (7,213.5) &(6,576.5)  \\ 
  \hline \\[-1.8ex] 
  Observations & 190 & 169 & 190 \\ 
  R$^{2}$ & 0.240 & 0.271 & 0.282 \\ 
  \bottomrule
  \textit{Note:}  & \multicolumn{3}{r}{$^{*}$p$<$0.1; $^{**}$p$<$0.05; $^{***}$p$<$0.01} \\ 
  \end{tabular} 
  \end{table} 
\end{document}